\documentclass[10pt]{article}

\newcommand{\ket}[1]{|#1\rangle}

\usepackage[margin=3cm, a4paper]{geometry}
\usepackage{amsmath,amsthm,amsfonts,amssymb}
\usepackage{fontenc}
\usepackage[all]{xy}
\usepackage{graphicx,subfigure,multirow}
\usepackage[figurename=Fig.]{caption}
\usepackage{tikz}
\usetikzlibrary{shapes, arrows, shadows}
\usepackage[latin1]{inputenc}
\usepackage[scaled]{helvet}
\usepackage[toc, page]{appendix}

\newtheorem{theorem}{Theorem}[subsection]
\newtheorem{definition}[theorem]{Definition}

\newtheorem{lemma}[theorem]{Lemma}

\newtheorem{Corollary}[theorem]{Corollary}

\numberwithin{equation}{subsection}

\begin{document} 

\begingroup  
\centering
{\Large\textbf{Bounds on the power of proofs and advice in general physical theories} \\[1.5em]
  \normalsize Ciar\'{a}n M. Lee\footnote{
Electronic address: ciaran.lee@cs.ox.ac.uk}  and Matty J. Hoban}\\[1em]
{\it  University of Oxford, Department of Computer Science, Wolfson Building, Parks Road, Oxford OX1 3QD, UK.}\\[1em]
\endgroup

\begin{abstract} 
Quantum theory presents us with the tools for computational and communication advantages over classical theory. One approach to uncovering the source of these advantages is to determine how computation and communication power vary as quantum theory is replaced by other operationally-defined theories from a broad framework of such theories. Such investigations may reveal some of the key physical features required for powerful computation and communication. In this paper we investigate how simple physical principles bound the power of two different computational paradigms which combine computation and communication in a non-trivial fashion: computation with advice and interactive proof systems. We show that the existence of non-trivial dynamics in a theory implies a bound on the power of computation with advice. Moreover, we provide an explicit example of a theory with no non-trivial dynamics in which the power of computation with advice is unbounded. Finally we show that the power of simple interactive proof systems in theories where local measurements suffice for tomography is non-trivially bounded. This result provides a proof that $\bold{QMA}$ is contained in $\bold{PP}$ which does not make use of any uniquely quantum structure -- such as the fact that observables correspond to self-adjoint operators -- and thus may be of independent interest.

\end{abstract} 

%\begin{fmtext}

%\end{fmtext}

%\maketitle

\section{Introduction}    

\subsection{Motivation}  

Since the mid $1980$s there has been growing evidence that quantum theory offers dramatic advantages in both computation and communication problems \cite{arkhipov,shor,Nielsen-Chuang, Raz, Dam}. In particular, the existence of an efficient quantum algorithm for factoring \cite{shor} and of a communication problem for which quantum theory requires exponentially less communication to solve \cite{Raz} has challenged classical conceptions of what problems are efficiently solvable in our physical world. 

Much recent work has been concerned with uncovering the source of this quantum advantage \cite{Magic,Vidal,Hoban-2011,Hoban-Browne,Raussendorf,discord,Interference-speed-up,anti-discord,entanglement, BellAdvantage}. One approach to this problem is to view quantum theory in the context of a framework general enough to accommodate essentially any operationally-defined theory \cite{LB-2014, LS-2015}. While most of these theories may not correspond to descriptions of our physical world, they nevertheless make good operational sense and allow one to systematically assess how computation and communication power depend on the underlying physical theory. Determining how computation and communication power vary as quantum theory is replaced by other operationally-defined theories may reveal some of the key physical features required for powerful computation and communication. 
  
More generally, within this framework, one can identify physical principles that theories may or may not satisfy, such as causality (no signalling from future to past), or tomographic locality (local measurements suffice for tomography of joint states). It's recently been shown\footnote{Other investigations linking physical principles to computation can be found in \cite{LB-2014, BBHL-2015, LS-2015, LS1-2015}.} \cite{LB-2014} that for any theory satisfying tomographic locality, whether or not causality is satisfied, computational problems that can be solved efficiently are contained in the classical complexity class $\bold{AWPP}$ -- a fact first proved in the quantum case by Fortnow and Rogers \cite{AWPP}. 

In this paper we investigate how simple physical principles bound the power of two different computational paradigms which combine both computation and communication in a non-trivial fashion: computation with advice and interactive proof systems. These are standard tools in computational complexity and one can view our work as methodically exploring the impact of general physical theories upon these tools (further expanding upon the work in \cite{LB-2014}).

\subsection{Overview of the results}  

Computation with advice considers the situation where an efficient computer is supplemented with extra information, or \emph{advice}, which, in classical computation, takes the form of a bit string and, in quantum computation, takes the form of a quantum state. The usefulness of this computational paradigm is that no so-called uniformity constraints are placed on the string or state embodying the advice -- as is usually the case when one considers efficient computation -- and so one can attempt to encode solutions to hard problems in the advice. Aaronson was among\footnote{Quantum computation with advice was first defined and studied in \cite{NY03}} the first to study and set bounds on the power of quantum computation with (quantum) advice \cite{Aaronson-advice}. His primary motivation was a desire to investigate the question ``How many classical bits can `really' be encoded into $n$ qubits?'' from a complexity theoretic point of view.
%\matty{We will give a rigorous description of uniformity in general later on but, informally, it is the constraint that descriptions of circuits, together with the matrix entries of any gate in these circuits, must be efficiently computable by a classical computer.} By removing the uniformity requirement, the question of whether a quantum state encodes an exponential amount of \emph{useful} information can be addressed from within the formalism of computation with advice. 
%In quantum computing, the increase in computational power when an efficient quantum computer is supplemented with advice provides a measure of the amount of information that can be stored and extracted from a quantum state. A constraint on advice is that its manifestation must have an efficient number of component parts, e.g. in quantum computation the number of qubits in the state is at most a polynomial in the size of the input. 

Aaronson noted \cite{Aaronson-advice} that quantum advice is quite closely related to quantum one-way communication\footnote{Quantum one-way communication can be described as follows: Alice has an $n$-bit string $x$, Bob has an $m$-bit string $y$, and together they wish to evaluate $f(x,y)$ where $f:\{0,1\}^n \times \{0,1\}^m \rightarrow \{0,1\}$ is a Boolean function. After examining her input $x=x_1\dots x_n$, Alice can send a single quantum message $\rho_x$ to Bob, whereupon Bob, after examining his input $y=y_1\dots y_m$, can choose some basis in which to measure $\rho_x$. He must then output a claimed value for $f(x,y)$. We are interested in how long Alice's message needs to be, for Bob to succeed with high probability on any $x,y$ pair.}, since one can think of an advice state as a one-way message sent to an algorithm by a benevolent ``advisor''. The class of decision problems which can be efficiently solved on a quantum computer with access to a quantum advice state is denoted $\bold{BQP/qpoly}$, and Aaronson showed \cite{Aaronson-advice} that $\bold{BQP/qpoly}\subseteq\bold{PP/poly}$. Based on the relation between quantum advice and quantum one-way communication, the size of the class $\bold{BQP/qpoly}$ can, in some sense, be thought of as a measure of prowess in communication tasks, or, intuitively speaking, as a measure of how much `useful' information can be stored in a quantum state.  

If the computational power of a general theory can be considered a measure of the richness of its dynamics, then the increase in computational power when supplemented with advice can be thought of -- \textit{\`{a} la} Aaronson in the previous discussion -- as a measure of the information that can be stored in its states. In section~\ref{general-def} we provide rigorous definitions of the class of decision problems that can be solved in a specific operational theory when provided with a trusted advice state from that theory -- which we call $\bold{BGP/gpoly}$ for a particular theory $\bold{G}$. We show that in the theory colloquially known as ``Boxworld'', which has the strongest correlations consistent with the no-signalling principle and was first discussed by Popescu and Rohrlich in \cite{PRbox, Popescu-review}, the class $\bold{BGP/gpoly}$ contains all decision problems and so is optimally powerful. Despite this, section~\ref{dynamics} shows that theories with a certain amount of non-trivial dynamics satisfy the same upper bound on the power of computation with advice as was discussed in the previous paragraph for quantum theory. In particular, for theories $\bold{G}$ with non-trivial dynamics we show that $\bold{BGP/gpoly} \subseteq \bold{PP/poly}$. Boxword has no non-trivial reversible dynamics and it was shown by van Dam \cite{PR-van-Dam} that communication complexity tasks in Boxworld can be solved trivially. Our result shows that when a theory \emph{has} non-trivial reversible dynamics there is a limit on its prowess in certain communication tasks -- as quantified by the size of the class $\bold{BGP/gpoly}$. %We take this an indication of the existence of a trade-off between states and dynamics in physical theories. 
 
A key point in the above discussion is that one trusts the advice provider. That is, one trusts that the received advice contains the information the provider claims it does. In reality the provider could be malevolent and out to deceive the receiver. If one cannot trust the provider, a computer must be used to check -- or \emph{verify} -- that the provided advice is correct and this verification process requires non-trivial dynamics to implement. Thus, by learning how computational complexity changes as the amount of trust we have in the provider is varied, we enter into a regime where both prowess in communication tasks and computational power -- corresponding to the existence of non-trivial dynamics -- are simultaneously tested. 

%If a theory lacks states with high information density the computational complexity will not increase much beyond the case of ordinary efficient computation and if a theory lacks rich dynamics then a circuit from that theory cannot verify a given advice state and so cannot take advantage of any information contained within it. Given an operational theory, one can define the class of decision problems for which a ``yes'' answer can be solved efficiently, with help from ``untrusted advice" that must be verified by a uniform circuit from the theory. The size of this class then, intuitively speaking, provides a measure of both the richness of that theories dynamics and the information density of its states. 

Within theoretical computer science, untrusted advice has been formally referred to as \textit{proofs} and has a long history within computational complexity. For example, the famous class $\bold{NP}$ can be described as a proof system between an efficient, deterministic, classical computer (verifier) and an all-powerful prover where the prover gives polynomially-sized proofs to the verifier. Here the verifier wishes to check if this proof is the correct solution to a particular problem. In quantum computing, the corresponding complexity class to $\bold{NP}$ is denoted $\bold{QMA}$, for Quantum Merlin-Arthur. The question of what useful problems a quantum computer can solve when given a non-uniform quantum state as a proof from an untrusted source has led to surprising and beautiful connections between quantum computation and condensed matter physics \cite{Aharonov}. 

In section~\ref{general-def}, we give a rigorous definition of the class of problems for which a verifier with an efficient computer from a specific theory can solve when given proof states from that theory -- which we call $\bold{GMA}$ for a particular theory $\bold{G}$. We show, in section~\ref{trade-off}, that there exists a universal upper bound on $\bold{GMA}$ for all causal and tomographically local theories. In particular, we show that $\bold{GMA}\subseteq\bold{PP}$ for all $\bold{G}$ satisfying tomographic locality and causality. Note that Boxworld is an example of such a theory. Some results concerning the connection between trusted advice and proof verification in general theories are given in section \ref{optimal}.

\section{The framework}  \label{gpt}
\subsection{Operational theories} 

We work in the circuit framework for generalised probabilistic theories developed by Hardy in \cite{Hardy-2011} and Chiribella, D'Ariano and Perinotti in \cite{Pavia1,Pavia2}. The presentation here is most similar to that of Chiribella \textit{et al}. We now provide a brief review of this framework, see \cite{LB-2014} for a more in-depth review and an extended discussion of computation in general theories.  

A theory within this framework specifies a set of laboratory devices that can be connected together in different ways to form experiments and assigns probabilities to different experimental outcomes. A laboratory device comes equipped with input ports, output ports, and a classical pointer. When a device is used in an experiment, the pointer comes to rest in one of a number of positions, indicating some outcome has occurred. One can intuitively think of \emph{physical systems} as passing between the input and output ports of the laboratory devices. and these physical systems come in different types, denoted by labels $A,B,C,\dots$. In an experiment these devices can be composed both sequentially and in parallel, and when composed sequentially, types must match: the output system of the first device must be of the same type as the corresponding input system of the second.
 
In a general theory, one can depict the connections of devices in some experimental set-up by closed circuits. A fundamental requirement on any physical theory is that it should be able to give probabilistic predictions about the occurrence of possible outcomes. It is thus demanded that, in this framework, closed circuits define probability distributions. Given this structure, one then says that two physical devices are equivalent (from the point of view of the theory) if replacing one by the other in any closed circuit does not change the probabilities. The set of equivalence classes of devices with no input ports are referred to as \emph{states}, devices with no output ports as \emph{effects} and devices with both input and output ports as \emph{transformations}. 

The `Dirac-like' notation $|s_{r})_A$ is used to represent a state of system type $A$, where $r$ is the outcome of the classical pointer, and $_A(e_{r}|$ to represent an effect on system type $A$, so that if the effect $_A(e_{r_2}|$ is applied to the state $|s_{r_1})_A$, the probability to obtain outcome $r_1$ on the physical device representing the state and outcome $r_2$ on the physical device representing the effect is
\[
_A(e_{r_2}|s_{r_1})_A := P(r_1,r_2).
\]
The fact that closed circuits correspond to probabilities can be leveraged to show that the set of states, effects and transformations each give rise to a vector space and that the transformations and effects act linearly on the vector space of states. We assume in this work that all vector spaces are finite dimensional.

We can now formally define some examples of physical principles. We will first discuss the principles of \emph{causality} and \emph{tomographic locality} which were briefly mentioned in the introduction section. 
\begin{definition}[Causality \cite{Pavia1}]
A theory is said to be \emph{causal} if the marginal probability of a preparing a state is independent of the choice of which measurement follows the preparation. 
\end{definition}
More formally, if $\{|s_i)\}_{i\in{X}}$ are the states corresponding to the preparation, consider the probability of outcome $i$, given that a subsequent measurement $\mathcal{M}$ corresponds to a set of effects $\{ {(e_j|\}_{j\in{Y}}}$:
\[
P(i | \mathcal{M} ):=\sum_{j\in{Y}}(e_j|s_i).
\]
The theory is causal if for any system type $A$, any preparation test with outcome $i$, and any pair of measurements, $\mathcal{M}$ and $\mathcal{N}$, with input type $A$, 
\[
P(i|\mathcal{M})=P(i|\mathcal{N}). 
\] 
One can think \footnote{Provided one thinks of circuits as having a temporal order, with tests later in the sequence occurring at a later time than tests earlier in the sequence.} of the causality principle as intuitively capturing the notion of \emph{no signalling from the future}. It was shown in \cite{Pavia1} that a theory is causal if and only if for every system type $A$, there is a unique deterministic effect $_A(u|$. In this case, a measurement, with corresponding effects $\{ {(e_j|\}_{j\in{Y}}}$, satisfies $\sum_j { (e_j |} = {(u|}$. A state $|s)$ is \emph{normalised} if and only if ${(u|s)} = 1$. It can be shown that, without loss of generality, every state in a causal theory can be taken to be normalised \cite{Pavia1}. 

\begin{definition}[Tomographic locality \cite{Pavia1, Hardy-2011, Barrett-2007}]
A theory satisfies tomographic locality if every transformation can be uniquely characterised by local process tomography. 
\end{definition}
That is, in a tomographically local theory, if two transformations, with matching input and output ports, give the same probabilities for all product state inputs and product effect measurements, then the transformations must be equivalent. Tomographic locality implies that the matrix corresponding to a composite transformation is just the vector space tensor product of the matrices of each individual transformation in the composite. 

We will now define strong symmetry, a principle, which if satisfied, guarantees the existence of a certain type of non-trivial dynamics. Before we define this principle, the following concepts must be introduced. We say the laboratory device $\{\mathcal{U}_j\}_{j\in{Y}}$, where $j$ indexes the positions of the classical pointer, is a \emph{coarse-graining} of the device $\{\mathcal{E}_i\}_{i\in{X}}$ if there is a disjoint partition $\{X_j\}_{j\in{Y}}$ of $X$ such that $\mathcal{U}_j=\sum_{i\in{X_j}}\mathcal{E}_i$. That is, coarse-graining arises when some outcomes of a laboratory device are joined together. The device $\{\mathcal{E}_i\}_{i\in{X}}$ is said to \emph{refine} the device $\{\mathcal{U}_j\}_{j\in{Y}}$. A state is \emph{pure} if it does not arise as a \emph{coarse-graining} of other states; a pure state is one for which we have maximal information. A state is \emph{mixed} if it is not pure and it is \emph{completely mixed} if any other state refines it. That is, $|c)$ is completely mixed if for any other state $|\rho)$, there exists a non-zero probability $p$ such that $p|\rho)$ refines $|c)$. States $\{|\sigma_i)\}_{i=1}^N$ are \emph{perfectly distinguishable} if there exists a measurement, corresponding to effects $\{(e_i|\}_{i=1}^N$, such that $(e_i|\sigma_j)=\delta_{ij}$ for all $i,j$. 

\begin{definition}[Strong symmetry \cite{Higher-order-reconstruction}] \label{symm}
A theory satisfies \emph{strong symmetry} if for any two $n$-tuples of pure and perfectly distinguishable states $\{|\rho_1),\dots,|\rho_n)\},$ and $\{|\sigma_1),\dots,|\sigma_n)\},$ there exists a reversible transformation $T$ such that $T|\rho_i)=|\sigma_i)$ for $i=1,\dots,n$.
\end{definition}

In section \ref{dynamics}, we will mainly be concerned with two special cases of the above principle:
\begin{enumerate}
\item \textbf{Permutability}: A general theory satisfies \emph{Permutability} if for any $n$-tuple of pure and perfectly distinguishable states and any permutation $\pi$ of this $n$-tuple
$$\{|\rho_1),\dots,|\rho_n)\} \quad\&\quad \{|\rho_{\pi(1)}),\dots,|\rho_{\pi(n)})\},$$
there exists a reversible transformation $T$ such that $T|\rho_i)=|\rho_{\pi(i)})$ for $i=1,\dots,n$.
\item \textbf{Bit-symmetry}: A theory satisfies \emph{bit-symmetry} if for any two $2$-tuples of pure and perfectly distinguishable states $\{|\rho_1),|\rho_2)\},\{|\sigma_1), |\sigma_2)\},$ there exists a reversible transformation $T$ such that $T|\rho_i)=|\sigma_i)$ for $i=1,2$.
\end{enumerate}

Permutability is the special case of definition \ref{symm} where one of the sets of pure and perfectly distinguishable states is a permutation of the other. Bit-symmetry is the $n=2$ case of definition \ref{symm}.

Note that causality, tomographic locality and strong symmetry are all logically independent: generalised probabilistic theories satisfying any subset (including the empty subset) can be defined. For example, standard quantum theory satisfies all three, quantum theory with real amplitudes satisfies causality and strong symmetry but not tomographic locality, Boxworld satisfies causality and tomographic locality but not strong symmetry and the theory constructed in \cite{cause} does not satisfy causality. 

\subsection{Efficient computation}

To define the class of efficient computation in a general theory, we must first define the notions of a uniform circuit family and an acceptance condition for an arbitrary theory. The notion of a poly-size uniform circuit family $\{C_x\}$, which is indexed by some bit string $x$, can be defined as follows:
\begin{enumerate} 
\item The number of gates in the circuit $C_x$ is bounded by a polynomial in $|x|$.
\item There is a finite \footnote{For a uniformity condition where the size of the gate grow with circuits size, see \cite{NdB}} gate set $\mathcal{G}$, such that each circuit in the family is built from elements of $\mathcal{G}$.
\item For each type of system, there is a fixed choice of basis, relative to which transformations are associated with matrices. Given the matrix ${M}$ representing (a particular outcome of) a gate in $\mathcal{G}$, a Turing machine can output a matrix $\widetilde{{M}}$ with rational entries, such that $ | ({M} - \widetilde{{M}})_{ij} | \leq \epsilon$, in time polynomial in $\log(1/\epsilon)$. 
\item There is a Turing machine that, acting on input $x=x_1x_2\dots x_n$, outputs a classical description of $C_x$ in time bounded by a polynomial in $|x|$.
\end{enumerate}

At the end of each run of the computation, each gate in the circuit has a classical outcome -- corresponding to the final position of the classical pointer -- associated with it, and the theory defines a joint probability for these outcomes. Denoting the string of observed outcomes by $z$, we define the final output of the computation to be given by a function $a(z)\in \{0,1\}$, where there must exist a Turing machine that computes $a$ in time polynomial in the length of the input $|x|$. We say the computation accepts an input string $x$ if $a(z)=0$, where $z$ is an outcome string of the circuit $C_x$. The probability that a computation accepts the input string $x$ is therefore given by
$$P_x({\mathrm{accept}})=\sum_{z|a(z)=0}P(z),$$ where the sum ranges over all possible outcome strings of the circuit $C_x$.

The class of problems that can be solved efficiently in a generalised probabilistic theory can now be defined.
\begin{definition}
For a generalised probabilistic theory $\bold{G}$, a language $\mathcal{L}$ is in the class $\bold{BGP}$ if there exists a poly-sized uniform family of circuits in $\bold{G}$, and an efficient acceptance criterion, such that
\begin{enumerate}
\item $x\in\mathcal{L}$ is accepted with probability at least $\frac{2}{3}$. 
\item $x\notin\mathcal{L}$ is accepted with probability at most $\frac{1}{3}$. \end{enumerate}  
\end{definition}  

The choice of constants $(\frac{2}{3},\frac{1}{3})$ is arbitrary as long as they are bounded away from $1/2$ by some constant \footnote{This can be further relaxed to being bounded away from $1/2$ by an inverse polynomial in the size of the input. For simplicity, we just consider being bounded away from $1/2$ by a constant.}. See page 9 of \cite{LB-2014} for a discussion of this and the fact that the acceptance probability can be amplified as in the usual quantum case. Given these definitions, the following theorem was proved in \cite{LB-2014}.

\begin{theorem}  
For any generalised probabilistic theory  $\bold{G}$ satisfying tomographic locality, we have $$\bold{BGP}\subseteq\bold{AWPP}\subseteq\bold{PP}\subseteq\bold{PSPACE}$$ 
\end{theorem}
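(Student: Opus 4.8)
This is the theorem of \cite{LB-2014}, so I only sketch the argument. The two rightmost inclusions are classical complexity theory: $\bold{AWPP}\subseteq\bold{PP}$ holds because, by definition, an $\bold{AWPP}$ language is decided by thresholding a suitably normalised $\bold{GapP}$ function, and $\bold{PP}$ is exactly the class of languages so decidable; and $\bold{PP}\subseteq\bold{PSPACE}$ holds because a $\bold{PP}$ acceptance probability is a sum over exponentially many computation paths, each evaluable in polynomial space, which can be accumulated in polynomial space. The real content is $\bold{BGP}\subseteq\bold{AWPP}$, and the plan is to mimic the Fortnow--Rogers proof that $\bold{BQP}\subseteq\bold{AWPP}$ \cite{AWPP}, with tomographic locality playing the structural role that the tensor-product form of quantum circuits plays there.

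So fix $\mathcal{L}\in\bold{BGP}$, witnessed by a poly-size uniform family $\{C_x\}$ over a finite gate set $\mathcal{G}$ and an efficient acceptance function $a$. First I would write the acceptance probability as a ``sum over histories''. For a fixed outcome string $z$, the circuit $C_x$ restricted to that outcome is a sequential-and-parallel composite of the corresponding gate outcomes; by tomographic locality the matrix of a parallel composite is the tensor product of the component matrices, so the matrix of $C_x$-with-outcome-$z$ is an iterated product of tensor products of constant-size gate matrices. Contracting against the fixed input state and the accepting effect and summing over all $z$ with $a(z)=0$ then gives
\[
P_x(\mathrm{accept})=\sum_{z:\,a(z)=0}\ \sum_{\gamma}\ \prod_k (M_k)_\gamma,
\]
an exponentially large sum indexed by a polynomial-length string $y$ encoding an accepting outcome string together with a path $\gamma$ through the tensor network, in which each summand is a product of polynomially many entries of gate matrices (plus one state entry and one effect entry). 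The point is that each such entry is a \emph{fixed real number} that, by the uniformity condition, can be approximated to additive error $\epsilon$ in time $\mathrm{poly}(\log(1/\epsilon))$, and the finiteness of $\mathcal{G}$ bounds all of them by an absolute constant; hence the weight of a history is computable to any desired precision in time polynomial in $|x|$ and $|y|$.

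The remaining step is to package this as an $\bold{AWPP}$ computation. I would fix a single dyadic scale $2^{-m}$, with $m=\mathrm{poly}(|x|)$ to be chosen, replace every entry appearing in the history sum by a $2^{-m}$-accurate dyadic rational, and clear the common denominator $2^{K}$ for an appropriate fixed polynomial $K$; each history weight then becomes an integer of magnitude at most $2^{\mathrm{poly}(|x|)}$ that is computable in polynomial time from $x$ and $y$. A polynomial-length sum of a polynomial-time-computable, polynomially bounded, integer-valued function is a $\bold{GapP}$ function $f$, and $f(x)/2^{K}$ agrees with $P_x(\mathrm{accept})$ up to an error bounded by $(\text{number of histories})\times(\text{per-history error})\le 2^{\mathrm{poly}(|x|)}\cdot\mathrm{poly}(|x|)\cdot 2^{-m}$. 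Given any target polynomial $r$, one first amplifies the $\bold{BGP}$ gap so that $P_x(\mathrm{accept})\ge 1-2^{-r-1}$ on yes-instances and $\le 2^{-r-1}$ on no-instances, then picks $m$ large enough (still polynomial) to force the approximation error below $2^{-r-1}$; the resulting $f,K$ satisfy the defining inequalities of $\bold{AWPP}$, and the slack from amplification absorbs the minor issue of clipping $f(x)/2^{K}$ into $[0,1]$. The part I expect to be the main obstacle is precisely this error bookkeeping --- committing to one scale $2^{-m}$ for all gates, multiplying polynomially many approximants, and then taking an exponentially large \emph{signed} sum, while keeping the total error inside the $2^{-r}$ budget --- which is where the uniform bound on entries and the $\mathrm{poly}(\log(1/\epsilon))$-time precision guarantee are both essential. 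Conceptually, it is also worth flagging that the history-sum representation itself is what forces the tomographic-locality hypothesis: without it the matrix of a composite need not be the tensor product of the component matrices, so a history's weight would not be computable from local data, and indeed the bound fails for tomographically non-local theories such as quantum theory over the reals.
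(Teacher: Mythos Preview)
Your proposal is correct and matches the approach of \cite{LB-2014}; the present paper does not give its own proof but simply cites that reference, and your sketch accurately reproduces the Lee--Barrett argument (tomographic locality gives the tensor-product structure needed to write $P_x(\mathrm{accept})$ as a sum-over-histories with polynomially computable weights, which one then approximates by dyadic rationals to obtain a $\bold{GapP}$ function meeting the $\bold{AWPP}$ promise).

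One small inaccuracy in your final sentence: it is not true that ``the bound fails'' for real quantum theory. Real quantum mechanics embeds into complex quantum mechanics, so $\bold{BQP}_{\mathbb{R}}\subseteq\bold{BQP}\subseteq\bold{AWPP}$; what fails is only this particular \emph{proof technique}, since without tomographic locality the matrix of a parallel composite need not be the tensor product of the component matrices and the history-sum decomposition is unavailable. You should phrase it as ``the argument does not go through'' rather than ``the bound fails''.
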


It is worth noting that due to the computation of the acceptance of an input $x$, we are given polynomial deterministic classical computation ``for free''. As a result, the lower bound of $\bold{P}\subseteq\bold{BGP}$ is satisfied for all theories $\bold{G}$.

One can define a notion of generalised circuits with the ability to \emph{post-select} on at most exponentially-unlikely circuit outcomes. These are poly-sized uniform circuits in a general theory, where the probability of acceptance is conditioned on the circuit outcome $z$ lying in a (poly-time computable) subset of all possible values of $z$. 

\begin{definition}
A language $\mathcal{L}$ is in the class $\bold{PostBGP}$ if there is a poly-sized uniform circuit family in that theory and an efficient acceptance condition, such that 
\begin{enumerate}
\item There exists a constant $D$ and polynomial $w$ such that $P(z\in S)\geq 1/D^{w(|x|)}$
\item If $x\in\mathcal{L}$ then $P_x(\mathrm{accept}|z\in S)\geq\frac{2}{3}$
\item If $x\notin\mathcal{L}$ then $P_x(\mathrm{accept}|z\in S)\leq\frac{1}{3}$
\end{enumerate}  
where $z$ is the circuit outcome, $S$ is a subset of all possible circuit outcomes and $z\in S$ can be checked by a Turing machine in time polynomial in $|x|$.
\end{definition}
Aaronson showed in \cite{Post} that $\bold{PostBQP}=\bold{PP}$ and the following theorem was shown in \cite{LB-2014}.
\begin{theorem} \label{post}
For any generalised probabilistic theory $\bold{G}$ satisfying tomographic locality, we have $$\bold{PostBGP}\subseteq\bold{PostBQP}$$
\end{theorem}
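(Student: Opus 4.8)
The plan is to prove the apparently stronger statement $\bold{PostBGP}\subseteq\bold{PP}$ and then invoke Aaronson's identity $\bold{PostBQP}=\bold{PP}$ from \cite{Post} to conclude. The idea is that, for a tomographically local $\bold{G}$, the acceptance probability and the post-selection probability of a generalised circuit can each be written as a $\bold{GapP}$ function divided by a fixed power of two, up to an exponentially small and controllable error, and that a $\bold{PP}$ machine can compare two such quantities. This is essentially the route by which $\bold{BGP}\subseteq\bold{AWPP}$ was established in \cite{LB-2014}; the only new feature is that the ``gap'' must now be computed conditionally on the post-selection event.

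First I would record the consequence of tomographic locality used there: for a poly-size uniform family $\{C_x\}$ in $\bold{G}$, the matrix of the whole circuit is the vector-space tensor product of the matrices of its individual gates, so for any fixed outcome string $z$ the probability $P(z)$ is a single entry of a product of polynomially many matrices of polynomial dimension. By clause~(3) of the definition of a uniform circuit family, each gate matrix can be replaced, in time $\mathrm{poly}(\log(1/\epsilon))$, by a dyadic-rational matrix within $\epsilon$ of it; taking $\epsilon$ to be an inverse power of two with a sufficiently large polynomial exponent, the resulting multilinear expression shows that there is a polynomial $r$ such that, for every polynomial-time-checkable set $T$ of outcome strings of $C_x$, there is a $\bold{GapP}$ function $g_T$ with
\[
\Bigl|\;\sum_{z\in T}P(z)\;-\;\frac{g_T(x)}{2^{r(|x|)}}\;\Bigr|\;\le\;2^{-t(|x|)},
\]
where $t$ is any polynomial fixed in advance. (One uses closure of $\bold{GapP}$ under subtraction and under summation over a polynomially-bounded index set, exactly as in \cite{LB-2014}: this is just the evident generalisation of the representation of $P_x(\mathrm{accept})$ used there, obtained by restricting the sum over outcome strings to $T$ rather than to all strings accepted by $a$.) Applying this with $T=S$ and with $T=\{z\in S:a(z)=0\}$ produces $\bold{GapP}$ functions $g_{\mathrm{post}}$ and $g_{\mathrm{acc}}$ approximating $2^{r}P_x(z\in S)$ and $2^{r}P_x(\mathrm{accept}\wedge z\in S)$, with the same normalisation $2^{r}$.

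Next I would feed in the $\bold{PostBGP}$ promise. Set $h(x):=2g_{\mathrm{acc}}(x)-g_{\mathrm{post}}(x)\in\bold{GapP}$, and write $q:=P_x(z\in S)$. Its exact analogue is the quantity $2^{r}\bigl(2P_x(\mathrm{accept}\wedge z\in S)-P_x(z\in S)\bigr)=2^{r}q\bigl(2P_x(\mathrm{accept}|z\in S)-1\bigr)$, which is at least $2^{r}q/3\ge 2^{r}/(3D^{w(|x|)})$ when $x\in\mathcal{L}$ and at most $-2^{r}q/3\le -2^{r}/(3D^{w(|x|)})$ when $x\notin\mathcal{L}$, using $q\ge 1/D^{w(|x|)}$ together with $P_x(\mathrm{accept}|z\in S)\ge 2/3$, resp.\ $\le 1/3$. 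If the precision polynomial $t$ is chosen so that $2^{t(|x|)}\ge 10\,D^{w(|x|)}$ -- a polynomial requirement, since $D$ is constant -- then $h(x)$ differs from this exact analogue by at most $3\cdot 2^{r(|x|)-t(|x|)}$, which is too small to change its sign; hence $x\in\mathcal{L}$ if and only if $h(x)>0$. Since deciding whether a $\bold{GapP}$ function is positive is exactly a $\bold{PP}$ predicate, this yields $\bold{PostBGP}\subseteq\bold{PP}=\bold{PostBQP}$.

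The step I expect to be the main obstacle is this last error analysis, specifically making the precision bookkeeping compatible with the fact that the $\bold{PostBGP}$ definition guarantees only an inverse-exponential lower bound $1/D^{w}$ on the post-selection probability: one must verify that the gate matrices can be approximated finely enough -- and hence the exponents $r$ and $t$ taken large enough -- to beat $1/D^{w}$ while everything remains polynomial-time computable. The representation of outcome probabilities as normalised $\bold{GapP}$ functions, by contrast, is inherited essentially verbatim from the tomographic-locality argument of \cite{LB-2014}, and this whole strategy mirrors Aaronson's proof that $\bold{PostBQP}\subseteq\bold{PP}$, transported into the generalised-probabilistic-theory formalism.
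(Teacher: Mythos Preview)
The paper does not actually prove this theorem; it simply quotes the result from \cite{LB-2014}. So there is no in-paper proof to compare against, only the surrounding context.

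Your approach is correct and is exactly the expected one. The paper's later proof of $\bold{GMA}\subseteq\bold{A_0PP}$ confirms that in \cite{LB-2014} the uniformity conditions plus tomographic locality are used to show that entries of the circuit matrix, and hence outcome probabilities after clearing denominators, are $\bold{GapP}$ functions; you are invoking precisely that representation, restricted to the post-selection event $S$ and to $S\cap\{a(z)=0\}$, and then comparing the two via the sign of $h=2g_{\mathrm{acc}}-g_{\mathrm{post}}$. The error bookkeeping you flag as the ``main obstacle'' is indeed the only thing to check, and your handling is right: since $D$ is a fixed constant, $D^{w(|x|)}=2^{w(|x|)\log_2 D}$ is $2^{\mathrm{poly}}$, so the precision polynomial $t$ can be chosen large enough that $3\cdot 2^{-t}$ is dominated by $q/3\ge 1/(3D^{w})$, while all matrices remain computable in polynomial time by clause~(3) of the uniformity definition. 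Concluding via $\bold{PP}=\bold{PostBQP}$ is the standard closing step and matches how the paper uses Theorem~\ref{post} elsewhere.
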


\section{Proofs and advice} \label{general-def}  

In this section we provide generalisations of the definitions of classical (quantum) computing with advice and a type of classical (quantum) interactive proof system to the framework of general operational theories. For an overview of the classical (quantum) definitions see appendix \ref{app} (appendix \ref{app1}). We will assume the reader is familiar with the definition of the familiar complexity classes $\bold{P}$ and $\bold{NP}$ as well as the formalism of quantum circuits. As with the definition of $\bold{BGP}$, unless otherwise stated, the constants $(\frac{2}{3},\frac{1}{3})$ can be chosen arbitrarily as long as they are bounded away from $\frac{1}{2}$ by some constant. 

\subsection{Definitions for general theories}  

Circuits from a uniform circuit family $\{C_x\}$ in some general theory are indexed by the string $x$ that encodes the decision problem the theory is attempting to solve. In defining the class of efficient computation in a theory, the family $\{C_x\}$ is taken to consist of closed circuits from that theory. This will not be the case when advice and proofs are involved; in this paradigm, one is given both the problem instance $x$ and a proof or advice state, so the constructed circuit $C_x$ must have open system ports into which this state can be plugged. Henceforth we will assume that uniform circuit families consist of collections of circuits with a number of open input ports, which can grow as a polynomial in $|x|$, which we call the \emph{auxiliary register}. Note that the choice of finite gate set determines the possible system types of the auxiliary register. Given this convention, we can define efficient computation with trusted advice in a specific general theory.

%That is, given the problem instance $x$, a classical Turning Machine outputs a description of $C_x$ which con

\begin{definition}
For a general theory $\bold{G}$, a language $\mathcal{L}\subseteq\{0,1\}^{n}$ is in the class $\bold{BGP/gpoly}$ if there exists a poly-sized uniform family of circuits $\{C_x\}$ in $\bold{G}$, a set of (possibly non-uniform) states $\{\sigma_{|x|}\}_{n\geq 1}$ on a composite system of size $d(n)$ for some polynomial $d:\mathbb{N}\rightarrow\mathbb{N}$, and an efficient acceptance criterion, such that for all strings $x\in\{0,1\}^n$:
\begin{enumerate}
\item If $x\in\mathcal{L}$ then $C_x$ accepts with probability at least $2/3$ given $\sigma_{n}$ as input to the auxiliary register.

\item If $x\notin\mathcal{L}$ then $C_x$ accepts with probability at most $1/3$ given $\sigma_{n}$ as input to the auxiliary register.
\end{enumerate}
\end{definition}

Here by ``composite system of size $d(n)$'', we mean that the number of systems, or open ports, of the auxiliary register -- into which the advice state is input -- increases as $d(n)$, for $d$ a polynomial in the input size. Since, as mentioned, there an efficient, deterministic classical computer deciding acceptance and each state $\sigma_{n}$ has a classical pointer associated with it, classical advice can always be encoded into these pointers (of which there can be polynomially many). Therefore, we can always give the lower bound $\bold{P/poly}\subseteq\bold{BGP/poly}\subseteq\bold{BGP/gpoly}$, where the suffix $\bold{/poly}$ denotes classical advice.

\begin{definition}
For a general theory $\bold{G}$, a language $\mathcal{L}\subseteq\{0,1\}^{n}$ is in the class $\bold{GMA}$ if there exists a poly-sized uniform family of circuits $\{C_x\}$ in $\bold{G}$, a polynomial $d:\mathbb{N}\rightarrow\mathbb{N}$ and an efficient acceptance criterion, such that for all strings $x\in\{0,1\}^n$:
\begin{enumerate}
\item If $x\in\mathcal{L}$ then there exists a (possibly non-uniform) proof state $\sigma$ on a composite system of size $d(n)$ such that $C_x$ accepts with probability at least $2/3$ given $\sigma$ as input to the auxiliary register.

\item If $x\notin\mathcal{L}$ then $C_x$ accepts with probability at most $1/3$ given $\sigma$ as input to the auxiliary system, for all states $\sigma$.
\end{enumerate}
\end{definition}

We refer the reader to Appendix \ref{app} for the definitions of computation with advice and proofs in the case of classical and quantum theory as we will make reference to these complexity classes throughout the paper. Informally, for the specific case of quantum theory, the $\bold{G}$ in the nomenclature should be replaced with $\bold{Q}$ and $\bold{/gpoly}$ is replaced with $\bold{/qpoly}$.

The existential quantifiers in the above definition of $\bold{GMA}$ rigorously capture the notion of a circuit having to ``verify'' the proof. Note also that advice states can only depend on the size of the input whereas proofs can, in general, be dependent on the inputs themselves. The amplification procedure of \cite{Kitaev-Watrous} that achieves exponential separation for the acceptance and rejection probabilities in $\bold{QMA}$, at the expense of a polynomial increase in the size of the witness state, can be adapted in a straightforward fashion to provide a similar amplification procedure for $\bold{GMA}$, for arbitrary $\bold{G}$. Note that $\bold{BGP}\subseteq\bold{GMA}$ follows straightforwardly from the definitions. Also, via the same arguments given to lower bound the class $\bold{BGP/gpoly}$, we can always give the lower bound $\bold{NP}\subseteq\bold{GMA}$.

It was proved in \cite{Kitaev-Watrous} that $\bold{QMA}\subseteq\bold{PP}$, and this was improved in \cite{APP} to $\bold{QMA}\subseteq\bold{A_0PP}$, (see also \cite{mariott-watrous}). Aaronson and Drucker \cite{Aaronson-Drucker} have shown the following remarkable relation between these two classes: $$\bold{BQP/qpoly}\subseteq\bold{QMA/poly}.$$ This says that one can always replace (poly-size) quantum advice by (poly-size) classical advice, together with a (poly-size) quantum proof\footnote{Note that advice can encode solutions to even undecidable problems, any upper bound on an advice class will be another advice class.}. Intuitively, this relation can be summed up as follows: one can always simulate an arbitrary quantum state $\rho$ on all small circuits, using a different state $\widetilde{\rho}$ that is easy to recognize\footnote{One can even take $\widetilde{\rho}$ to be the ground state of a local Hamiltonian \cite{Aaronson-Drucker}.}. In Section~\ref{trade-off} we investigate whether this relation holds for general operational theories.

\subsection{Example: Boxworld}

We now look at Boxworld with respect to our definitions of proofs and advice in general physical theories. Towards this end we provide a brief definition of Boxworld, see e.g. \cite{PR-trade-off} for a more in-depth discussion. For a given single system $A$ in Boxworld, there are two choices of binary-outcome measurements, $\{_A(x_a|\}$ for $x,a\in\{0,1\}$. Here $x$ is the bit denoting the two possible choices of measurement and $a$ is the bit denoting the two possible outcomes of the chosen measurement, i.e the two measurements on system $A$ are $\{_A(0_0|, _A(0_1|\}$ and $\{_A(1_0|, _A(1_1|\}$. States and measurements in this theory can produce correlations associated with the so-called Popescu-Rohrlich non-local box \cite{PRbox}. That is, for a bipartite system $AB$, there exist states $|\rho_{PR})_{AB}$ such that
$$(x_a|(y_b|\rho_{PR})_{AB}= \left\{
  \begin{array}{lr}
   \frac{1}{2}, \ \mathrm{if} \ a\oplus b= xy,\\
    0, \ \mathrm{otherwise}
  \end{array}
\right.
$$
where $\oplus$ represents addition modulo $2$. These correlations can be extended to an $n$-partite system where now for the $j$th party, $x_{j}\in\{0,1\}$ and $a_{j}\in\{0,1\}$ are the choice of measurement and its outcome respectively. There exists a state $|\rho_{f})$ and effects $\{_j(x_{j},a_{j}|\}$ for all $j$ parties that produce the probabilities \cite{Barrett-Pironio,Hoban-2011}
$$(x_{1},a_{1}|(x_{2},a_{2}|...(x_{n},a_{n}|\rho_f)= \left\{
  \begin{array}{lr}
   \frac{1}{2^{n-1}}, \ \mathrm{if} \ \bigoplus_{j=1}^{n}a_{j}= f(x),\\
    0, \ \mathrm{otherwise}
  \end{array}
\right.
$$
where $\bigoplus$ represents summation modulo $2$ and $f:\{0,1\}^{n}\rightarrow\{0,1\}$ is any Boolean function from the bit-string $x$ with elements $x_{j}$. Therefore, if the state $|\rho_{f})$ is prepared and local measurements described by effects $(x_{j},a_{j}|$ made, a classical computer can compute the parity of all outcomes $a_{j}$ and so we deterministically obtain the evaluation of Boolean function $f(x)$. This relatively straightforward observation gives us the following result.

\begin{theorem}\label{thmboxworld}
There exists a generalised probabilistic theory  $\bold{G}$ satisfying causality and tomographic locality, which satisfies $\bold{BGP/gpoly}=\bold{ALL}$ where $\bold{ALL}$ is the class of all decision problems.
\end{theorem}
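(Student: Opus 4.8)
The plan is to take $\bold{G}$ to be Boxworld itself. As already noted in the excerpt, Boxworld satisfies both causality and tomographic locality, so it remains only to establish $\bold{BGP/gpoly}=\bold{ALL}$ for this theory. Since $\bold{ALL}$ is by definition the class of \emph{all} decision problems, the inclusion $\bold{BGP/gpoly}\subseteq\bold{ALL}$ is automatic, and the entire content is to prove $\bold{ALL}\subseteq\bold{BGP/gpoly}$.

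So I would fix an arbitrary language $\mathcal{L}\subseteq\{0,1\}^*$, possibly even undecidable, and for each $n$ let $f_n\colon\{0,1\}^n\to\{0,1\}$ be its indicator function on length-$n$ strings, so that $f_n(x)=1$ exactly when $x\in\mathcal{L}$. Using the $n$-partite Boxworld construction recalled just before the theorem statement, the advice state for inputs of length $n$ is taken to be $\sigma_n:=|\rho_{f_n})$, the state with $(x_1,a_1|\cdots(x_n,a_n|\rho_{f_n})=\frac{1}{2^{n-1}}$ whenever $\bigoplus_{j=1}^{n}a_j=f_n(x)$ and $0$ otherwise. This state occupies a composite system with $n$ open ports, so the auxiliary register has size $d(n)=n$, polynomial in $|x|$ as required, and --- crucially --- $\sigma_n$ depends only on the input length $n$, exactly as the definition of $\bold{BGP/gpoly}$ permits. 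That the family $\{\sigma_n\}$ is in general non-uniform, indeed non-computable when $\mathcal{L}$ is undecidable, is no obstacle: advice states carry no uniformity constraint, which is precisely the point of the paradigm.

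The circuit $C_x$ is essentially trivial. On input $x=x_1\cdots x_n$ it applies to the $j$-th subsystem of the auxiliary register the local binary-outcome Boxworld measurement with setting $x_j$, recording the outcome $a_j$; the efficient acceptance criterion then computes the parity $\bigoplus_{j=1}^{n}a_j$ and accepts iff this equals $1$. This is a uniform circuit family: the gate set is finite (the Boxworld single-system measurements with settings $0$ and $1$), the associated matrices have dyadic rational entries, computing the parity of $n$ bits is linear-time classical work, and a Turing machine plainly outputs a description of $C_x$ from $x$ in polynomial time. By the defining property of $|\rho_{f_n})$ together with the tensor-product structure of composite transformations guaranteed by tomographic locality, the observed parity equals $f_n(x)$ with probability $1$; hence $C_x$ accepts with probability exactly $1$ if $x\in\mathcal{L}$ and exactly $0$ if $x\notin\mathcal{L}$, comfortably meeting the $2/3$ and $1/3$ thresholds. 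Thus $\mathcal{L}\in\bold{BGP/gpoly}$, and since $\mathcal{L}$ was arbitrary, $\bold{ALL}\subseteq\bold{BGP/gpoly}$.

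I do not expect any genuinely hard step here: the crucial idea --- a Boxworld state whose local measurement statistics deterministically evaluate an arbitrary Boolean function --- is exactly the observation preceding the theorem, and the rest is bookkeeping. The only points deserving a line of care are checking that the advice depends solely on $|x|$ (it does, since $f_n$ is fixed by $n$), verifying the four uniformity conditions for $\{C_x\}$ (all immediate), and confirming that the $n$-system auxiliary register really does fit the ``composite system of size $d(n)$'' convention with the polynomial $d(n)=n$.
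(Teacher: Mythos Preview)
Your proposal is correct and follows exactly the same approach as the paper: take $\bold{G}$ to be Boxworld, use the states $|\rho_{f_n})$ as advice, locally measure each subsystem with setting $x_j$, and compute the parity of the outcomes. The paper's own proof is essentially a one-sentence summary of what you have written; you have simply supplied the bookkeeping details (uniformity of $\{C_x\}$, dependence of advice on $|x|$ only, polynomial register size) that the paper leaves implicit.
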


\begin{proof}

Clearly $\bold{BGP/gpoly}\subseteq\bold{ALL}$ is trivially true for Boxworld. The states $|\rho_{f})$ can be used as advice states and, as all decision problems can be represented by Boolean functions, it follows that $\bold{ALL}\subseteq\bold{BGP/gpoly}$.
\end{proof}

%Note that the above proof still goes through if we insist that Boxworld only has reversible dynamics since we only need state preparation and measurement. If one considers the class $\textbf{GMA}$ for Boxworld with only reversible transformations then we have $\textbf{GMA}\subseteq\textbf{MA}$ since all reversible dynamics are trivial in this theory and thus can be simulated classically \cite{Rev,Rev1,Rev2}. That is to say that, while poly-size advice states in Boxworld can encode any boolean function, it has no non-trivial dynamics and thus cannot verify any proof states. Thus while efficient computation with advice in Boxworld can compute any decision problem, computation with untrusted proofs is no more powerful than in classical theory.

Note that the above proof still goes through if we insist that Boxworld only has reversible dynamics since the proof only requires the ability to prepare and measure states. If one considers the class $\bold{GMA}$ for Boxworld with only reversible transformations then we have $\bold{GMA}\subseteq\bold{MA}$ since all reversible dynamics are trivial in this theory and can thus be simulated classically \cite{Rev,Rev1,Rev2}. By trivial, we mean that the circuits in Boxworld only consist of making the local ``fiducial" measurements $\{_j(x_{j},a_{j}|\}$ on a state and performing classical post-processing on the outcomes. This process can be simulated by the prover giving the verifier the classical string of measurement outcomes similar to the approach of Lemma 2 in \cite{maq}. That is, while poly-size advice states in Boxworld can encode any Boolean function, the theory has no non-trivial dynamics to efficiently verify this function is encoded in the state if the prover cannot be trusted.

%\subsection{Untrusted advice?}

%\matty{At this point it is worth noting that we have discussed trusted advice that can only depend on the input size whereas (untrusted) proofs are, in general, dependent on the inputs themselves. One could make a more direct comparison between the two scenarios by only varying the trust involved and not the dependence on inputs. In particular, we may have an untrusted bit-string or state that only depends on the size of the input. Aaronson was the first to identity such a hybrid between advice and proofs that can be associated with the nomenclature of untrusted advice \cite{Aaronson-2006}. The classical complexity class of untrusted advice was called ``Yoda Polynomial-Time" or \textbf{YP} and can be suitably generalised to allow for quantum circuits and quantum states \cite{Aaronson-2006}. One can also further generalise these definitions to our framework of general physical theories and relate it to the complexity classes defined in this work. For the sake of brevity, we will not discuss this form of untrusted advice but it may prove useful in future work tightly characterising the power of advice and proofs in general theories. For example, Aaronson and Drucker use this definition of untrusted advice in their proof of $\textbf{BQP/qpoly}\subseteq\textbf{QMA/poly}$ \cite{Aaronson-Drucker}.}

\section{Consequences of non-trivial dynamics for computation} \label{dynamics}

%In trying to find a generalised probabilistic theory that can simulate quantum computation, we should restrict our attention to those theories that are at least as powerful as probabilistic classical computation. \matty{In order to better understand a potential trade-off between states and dynamics, we must first understand which theories have non-trivial dynamics.} However, the consequences of non-trivial dynamics in general theories are not as well understood as the consequences of non-locality. We seek here to remedy this situation. That is, in theories with non-trivial dynamics computation is powerful and there is a bound on the amount of ``useful'' information that can be extracted from states. The latter result provides evidence for the existence of a trade-off between states and dynamics.

In part \ref{a} of this section, we show the existence of non-trivial dynamics implies that computation in that theory is at least as powerful as probabilistic classical computation: $\bold{BPP}\subseteq\bold{BGP}$. Hence non-trivial dynamics imply non-trivial computational power. Furthermore in part \ref{b}, we show the existence of non-trivial dynamics implies a bound on the amount of ``useful'' information -- quantified by the size of the class $\bold{BGP/gpoly}$ -- that can be stored in general states.  

\subsection{Powerful computation from non-trivial dynamics} \label{a}

\begin{definition}
A theory is said to be \emph{non-classical} if, for at least one $n$-tuple of pure and perfectly distinguishable states $\{|\sigma_i)\}_{i=1}^N$, there exists a pure state $|y)$ such that $(e_i|y)=p_i$ for $0<p_i<1$ for all $i$, where $\{(e_i|\}_{i=1}^N$ is the measurement that distinguishes the $\{|\sigma_i)\}_{i=1}^N$. 
\end{definition}

Before we present our result, we wish to emphasize that the result is highlighting the \textit{intrinsic} computational power in a theory. As previously mentioned, in our framework we already have a classical computer that processes experimental data and, if a circuit in a theory $\bold{G}$ can produce random numbers, we can easily achieve the complexity class $\bold{BPP}$. By talking about intrinsic computational power, we imagine reducing the power of our classical computer to perform extremely simple, non-universal classical computation. For example, the classical computer in deciding the output of the computation could only output the classical counter value on one of the measurements. Our result then shows that theories with a certain amount of non-trivial dynamics still decide any problem in $\bold{BPP}$.
 %The following result says that computation in theories satisfying each of the above conditions is at least as powerful as classical probabilistic computation.
%(this computational power is intrinsic to the GPT, it does not make use of the fact that we can use a classical Turing machine to partition the outcome set of the circuit used in the computation). 
\begin{theorem} \label{power}
Let $\bold{G}$ be a causal, non-classical theory with at least two pure and distinguishable states that satisfies Permutability. %Let $\bold{G}$ allow adaptive circuits. 
Then $\bold{BPP}\subseteq\bold{BGP}$.
\end{theorem}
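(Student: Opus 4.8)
The plan is to show that a causal, non-classical theory $\bold{G}$ with Permutability and at least two pure perfectly distinguishable states can simulate an arbitrary $\bold{BPP}$ computation using only the "weak" classical computer described in the preamble (essentially just reading off one pointer value). The strategy has two ingredients: first, extract genuine randomness from the theory using the non-classicality assumption; second, use Permutability to implement a small set of reversible gates on an encoded classical bit that, together with that randomness, suffices to run a probabilistic classical circuit.

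First I would fix an $n$-tuple of pure perfectly distinguishable states $\{|\sigma_i)\}$ and a pure state $|y)$ witnessing non-classicality, so that the distinguishing measurement $\{(e_i|\}$ applied to $|y)$ yields outcome $i$ with probability $p_i \in (0,1)$. Preparing $|y)$ and measuring gives a biased but nondeterministic classical source; by standard von Neumann-style debiasing (done by the weak classical post-processor, or better, folded into the acceptance function which is allowed to be an arbitrary poly-time Turing machine) one obtains (near-)uniform random bits. Here I should be slightly careful: the acceptance criterion in $\bold{BGP}$ is permitted to be any poly-time computable function $a(z)$ of the full outcome string, so in fact the classical side is not as impoverished as the "intrinsic power" discussion suggests; the real content is that the *circuit* in $\bold{G}$ must do the nontrivial work, and the randomness genuinely comes from the theory. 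I would make this precise by noting that $O(\mathrm{poly}(n))$ independent copies of the "prepare $|y)$, measure $\{(e_i|\}$" gadget yield, except with exponentially small failure probability, enough uniform random bits to drive any $\bold{BPP}$ machine.

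Next I would handle the reversible-computation part. Pick two pure perfectly distinguishable states $|0'), |1')$; by causality these are normalised and there is a two-outcome measurement reading them out. Permutability applied to the pair $\{|0'),|1')\}$ and its swap $\{|1'),|0')\}$ gives a reversible transformation $X'$ with $X'|0')=|1')$, $X'|1')=|0')$ — an encoded NOT gate. Controlled operations are obtained by working with the pair states on composite systems: the $n=2$-tuple $\{|0'0'), |0'1')\}$ (or more carefully a suitably chosen distinguishable pair inside a two-system space) and its permutation yield a controlled-NOT-type reversible map, using that in a causal, tomographically-handled setting product states are again pure and perfectly distinguishable when the factors are. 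Combined with the ability to inject fresh random bits (from the first part) as fan-in, $\{X', \mathrm{CNOT}', \text{random bit}\}$ is a universal gate set for reversible+randomized classical computation, so any $\bold{BPP}$ circuit on input $x$ can be compiled into a poly-size circuit $C_x$ in $\bold{G}$ whose final readout (one pointer) equals the $\bold{BPP}$ machine's output with the required bounded error; uniformity of the compilation is routine.

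The main obstacle I anticipate is the controlled/two-bit gate: Permutability as stated only guarantees reversible transformations permuting a *single* $n$-tuple of pure perfectly distinguishable states, and I need to know that (a) products of pure perfectly distinguishable states are again pure and perfectly distinguishable (plausible but must be argued from the framework — pure$\otimes$pure pure, and a product measurement distinguishes the product states), and (b) a permutation of such a product $n$-tuple that realises a CNOT-like action actually exists and acts as desired on the *encoded* subspace without disturbing the computation. If Permutability on composite systems is too weak to give a genuine two-bit reversible gate, the fallback is to get universality from single-bit reversible gates plus randomness plus the (freely available) classical post-processing — i.e. use the $\bold{G}$-circuit only to generate randomness and to route/permute encoded bits, and lean on the fact that the acceptance Turing machine can do the Boolean logic — though this weakens the "intrinsic power" reading and so I would prefer to push the composite-system Permutability argument through. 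A secondary, more minor issue is amplification: ensuring the debiasing and the gate simulations compose to keep the total error bounded away from $1/2$, which follows from the standard $\bold{BGP}$ amplification results cited earlier in the excerpt.
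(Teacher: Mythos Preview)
Your randomness-extraction argument (non-classicality plus von Neumann debiasing on two copies of $|y)$) is exactly what the paper does, so that half is fine.

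The reversible-computation half has a genuine gap. The claim that $\{X',\mathrm{CNOT}',\text{random bit}\}$ is universal for randomized classical computation is false: NOT and CNOT generate only the \emph{affine} group over $\mathrm{GF}(2)$, and adding independent random input bits still only lets you compute randomized affine functions. In particular you cannot compute AND (or any non-affine Boolean function) with bounded error from such a circuit, so you cannot simulate $\bold{BPP}$. You need a non-linear reversible gate such as Toffoli.

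The paper's route sidesteps this cleanly and is worth comparing to. Rather than assembling a gate set one gate at a time, it applies Permutability \emph{once} to the full $2^n$-tuple of product states $\{|x_1)\otimes\cdots\otimes|x_n):x_i\in\{0,1\}\}$. For any reversible Boolean function $f:\{0,1\}^n\to\{0,1\}^n$, the tuple $\{|f(x))\}$ is a permutation of the original tuple, so Permutability directly furnishes a reversible transformation $T_f$ with $T_f|x)=|f(x))$. In particular this yields Toffoli on three systems, giving a finite universal reversible gate set; together with the random-bit source this establishes $\bold{BPP}\subseteq\bold{BGP}$. Note that this relies on exactly the assumption you correctly flagged as needing justification---that products of pure, perfectly distinguishable states are again pure and perfectly distinguishable---but once you grant that (as the paper does, tacitly), there is no reason to stop at CNOT: the same composite-system Permutability hands you every permutation of the computational basis, and your universality worry disappears.
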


\begin{proof}
%If $|\sigma)\in\bold{St(A)}$ is any pure state, then it follows from the no-restriction hypothesis that there is always another pure state $|\sigma')$ that is perfectly distinguishable from $|\sigma)$, unless the state space contains only a single point. 
%Therefore the no-restriction hypothesis along with strong symmetry implies transitivity: for every pair of pure states
%$|\sigma),|\tau)$ there is a reversible transformation $T$ such that $T|\sigma)=|\tau)$. 

For $\bold{BPP}\subseteq\bold{BGP}$, it is sufficient to show two things: that transformations of the general theory can simulate the action of any reversible Boolean function $f:\{0,1\}^n\rightarrow\{0,1\}^n$, and that it is possible to prepare a source of random bits. First, bit strings $x=x_1 \ldots x_n$ can be represented by perfectly distinguishable pure states $|x) = |x_1)\otimes \cdots \otimes |x_n)$. Then, the first condition follows from Permutability: since  $\{|f(0\ldots 0)), \dots, |f(1\ldots 1))\}$ is a permutation of the tuple of pure and perfectly distinguishable states $\{|0\ldots0), \dots, |1\ldots 1)\}$, there must exist a reversible transformation $T_f$ such that $T_f|x)=|f(x))$. 

For the second condition it suffices if there are circuits that can generate random bits. Consider the two pure and perfectly distinguishable states $|0)$ and $|1)$. Let $\{(e_0|,(e_1|\}$ be a measurement that distinguishes them, that is $(e_i|j)=\delta_{ij}$, for ${i,j}=0,1$. Non-classicality implies that there exists some pure state $|y)\notin\{|0),|1)\}$ such that $(e_0|y)=p$ and $(e_1|y)=1-p$, with $0<p<1$. Probabilities of $1/2$ can be generated by preparing two copies of $|y)$, implementing the measurement on each in parallel and assigning a value $y=0$ or $1$ to the outcomes $01$ and $10$ respectively\footnote{This argument is based on von Neumann's argument for turning two copies of a biased coin into one copy of an unbiased coin.}. 

%This follows from the fact that the measurement outcomes $01$ and $10$ occur with the same probability, $p(1-p)$. So if outcomes $00$ and $11$ are ignored, we will be left with a source of random bits, at the expense occasional inconclusive outcomes. If we prepare many copies of the state $|y)|y)$ and perform the above protocol on each of them, how many copies would we need to prepare, on average, in order to be sure that at least one copy gives the desired outcome? 

%If $t$ is the expected number of copies needed to get the desired outcome, then we have that  $$t=2p(1-p)+(1+t)(1-2p(1-p)),$$ which implies that $t=\frac{1}{2p(1-p)}$. So on average $\frac{1}{2p(1-p)}$ copies need to be prepared in parallel to yield one random bit. In a specific generalised theory this is just a constant. Our result follows.
\end{proof} 

\subsection{Bounds on computation with advice in physical theories} \label{b}

Recall that a state is \emph{mixed} if it is not pure and it is \emph{completely mixed} if any other state refines it. That is, $|c)$ is completely mixed if for any other state $|\rho)$, there exists a non-zero probability $p$ such that $p|\rho)$ refines $|c)$. Intuitively, one should be able to efficiently prepare a completely mixed state on a computer in any general theory. This follows because the completely mixed state can be prepared by performing any uniform state preparation and ``forgetting'' the outcome. Henceforth we shall assume that the completely mixed state -- if it exists -- is uniform.

Recall the definition of a bit-symmetry from section \ref{gpt}. In any bit-symmetric theory with at least two pure and distinguishable states, it can be shown \cite{MU} that the group of reversible transformations acts \emph{transitively} on the set of pure states. That is, given any two pure states $|\rho), |\sigma)$, there exists a reversible transformation $T$ such that $T|\rho)=|\sigma)$. This fact can be used \cite{Pavia1, Pavia2} to prove the existence of a completely mixed state as the unique state -- for a given system type -- that is invariant under all reversible transformations. 
 
Bit-symmetry is a powerful principle and has many useful consequences. Two more of which are: 
\begin{enumerate}
\item Every bit-symmetric theory is \emph{self-dual} \cite{MU}. That is, to every pure state $|\rho_e)$ there is associated a unique pure effect $(e_\rho|$, and vice versa \footnote{The proof of this fact requires two further technical assumptions, both implicit in Ref.~\cite{MU}. These are: the group of reversible transformations must be compact, and every mathematically allowed effect is physical. }. This association is achieved via an inner product $[.,.]$, on the real vector space $V$ generated by the set of states, as: $(e_\rho|\sigma)=[|\rho_e),|\sigma)]$, for all states $|\sigma)$. Note that $[|\rho),|\rho)]=1$ for all pure states $|\rho)$.
\item Let $\Vert |v) \Vert_{phy}=2\max_{(e|} |(e|v)|$ and $\Vert v \Vert_E=\sqrt{[v,v]}$, for $v$ an arbitrary vector in $V$. The norm $\Vert |\rho)-|\sigma) \Vert_{phy}$ has a natural operational interpretation as the distinguishably of $|\rho)$ and $|\sigma)$. Bit-symmetry implies \cite{black-holes} that $\Vert |\rho)-|\sigma)\Vert_{phy} \leq c \Vert |\rho)-|\sigma)\Vert_E,$ where $c=\Vert |c) \Vert_E$ for $|c)$ the completely mixed state. 
\end{enumerate}

Using the above facts, we now prove a version of the ``as good as new lemma''\footnote{Also called the ``gentle measurement lemma'', which was independently proved by Winter in \cite{Gentle1} and improved by Ogawa and Nagaoka in \cite{Gentle2}} -- discussed in the quantum case in \cite{Aaronson-advice} -- for all bit-symmetric theories. Before we state this lemma, we need to briefly introduce a notion of post-measurement state update rule for bit-symmetric theories. In this work applying a measurement to a state corresponds to a closed circuit -- that is a probability. However, to discuss post-measurement states, this must be generalised slightly. A measurement will henceforth correspond to a laboratory device from some input state to the output post-measurement state, where the classical pointer denotes the outcome of the measurement. Consider the measurement $\{(i|\}$, consisting of pure effects $(i|$, and apply it to some state $|\rho)$. On observing outcome $i$, the state $|\rho)$ is updated to $|\rho_i)/(u|\rho_i)$ where $|\rho_i)$ is the unique pure state associated to $(i|$. This state update rule satisfies a natural \emph{repeatability} condition: any state yielding outcome $i$ with unit probability is left invariant by the update rule, thus repeated measurements always yield the same result. See \cite{P-W} for more in-depth discussion of state update rules in general theories.

\begin{lemma} \label{as good as new}
Given a  two outcome measurement, consisting of the pure effects $\{(0|, (1|\}$, and a state $|\rho)$ such that $(0|\rho)=1-\epsilon$, for $\epsilon\geq 0$, the post-measurement state on observing outcome $0$ satisfies
$$\Vert |\rho)-|\rho_0) \Vert_{phy} \leq c\sqrt{2\epsilon},$$
where $c=\Vert |c) \Vert_E$ is the completely mixed state, in all bit-symmetric theories.
\end{lemma}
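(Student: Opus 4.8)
The plan is to push everything through the Euclidean norm $\Vert\cdot\Vert_E$ induced by the self-dualising inner product $[\cdot,\cdot]$, and then translate back to $\Vert\cdot\Vert_{phy}$ via the comparison inequality $\Vert |\rho)-|\sigma)\Vert_{phy}\le c\Vert |\rho)-|\sigma)\Vert_E$ quoted above from \cite{black-holes} (with $c=\Vert |c)\Vert_E$). So the whole task reduces to proving $\Vert |\rho)-|\rho_0)\Vert_E^{2}\le 2\epsilon$.

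First I would unpack the state-update rule in the self-dual setting: the pure effect $(0|$ has a unique associated normalised pure state, which is exactly the post-measurement state $|\rho_0)$, and the defining relation of self-duality gives $(0|\tau)=[|\rho_0),|\tau)]$ for every state $|\tau)$. Taking $|\tau)=|\rho)$ yields $[|\rho_0),|\rho)]=(0|\rho)=1-\epsilon$, while purity of $|\rho_0)$ gives $[|\rho_0),|\rho_0)]=1$.

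The one genuinely new ingredient is the bound $[|\rho),|\rho)]\le 1$ for an arbitrary normalised state $|\rho)$. I would obtain it by writing $|\rho)=\sum_i p_i |\sigma_i)$ as a convex combination of pure states (available since the state space is finite-dimensional and compact), so that bilinearity gives $[|\rho),|\rho)]=\sum_{i,j}p_i p_j [|\sigma_i),|\sigma_j)]$; each summand satisfies $0\le[|\sigma_i),|\sigma_j)]=(e_{\sigma_i}|\sigma_j)\le 1$ with $[|\sigma_i),|\sigma_i)]=1$, so $[|\rho),|\rho)]\le\big(\sum_i p_i\big)^2=1$. Expanding the squared norm using bilinearity and symmetry of $[\cdot,\cdot]$ then gives
$$\Vert |\rho)-|\rho_0)\Vert_E^{2}=[|\rho),|\rho)]-2[|\rho_0),|\rho)]+[|\rho_0),|\rho_0)]\le 1-2(1-\epsilon)+1=2\epsilon,$$
and composing with the comparison inequality yields $\Vert |\rho)-|\rho_0)\Vert_{phy}\le c\sqrt{2\epsilon}$.

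The step that demands the most care is the normalisation bookkeeping for $|\rho_0)$: the update rule as written in the excerpt outputs $|\rho_i)/(u|\rho_i)$, so one must check $(u|\rho_0)=1$, i.e. that the pure state self-dual to a pure effect is already normalised. This is where the standing assumptions underlying self-duality (compactness of the reversible group, and every mathematically allowed effect being physical --- see the footnote attached to the self-duality claim above) are used, together with the observation that $(0|+(1|=(u|$ forces $(1|\rho_0)=0$ once $(0|\rho_0)=[|\rho_0),|\rho_0)]=1$. Beyond this, the convex decomposition into pure states and the expansion of the norm are entirely routine once self-duality and the $\Vert\cdot\Vert_{phy}$-versus-$\Vert\cdot\Vert_E$ comparison are in hand.
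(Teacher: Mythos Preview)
Your proposal is correct and follows essentially the same route as the paper: reduce to the Euclidean norm via the comparison inequality, expand $\Vert |\rho)-|\rho_0)\Vert_E^2$ by bilinearity, use self-duality to identify $[|\rho_0),|\rho)]=(0|\rho)=1-\epsilon$, and bound $[|\rho),|\rho)]\le 1$. The paper simply asserts $\Vert |\sigma)\Vert_E\le 1$ for all states without justification, whereas you supply the convex-decomposition argument and the normalisation check for $|\rho_0)$; these are elaborations rather than a different strategy.
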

\begin{proof}
Recall in a bit-symmetric theory that $\Vert |\rho)-|\sigma)\Vert_{phy} \leq c \Vert |\rho)-|\sigma)\Vert_E.$ We thus have
$$\begin{aligned}
\Vert |\rho)-|\rho_0)\Vert_{phy} &\leq c \Vert |\rho)-|\rho_0)\Vert_E \\
&= c\sqrt{[|\rho)-|\rho_0), |\rho)-|\rho_0)]} \\
& \leq c\sqrt{2-[|\rho),|\rho_0)]-[|\rho_0),|\rho)] } \\
&= c\sqrt{2-2[|\rho_0),|\rho)]} \\
&=c\sqrt{2-2(0|\rho)}=c\sqrt{2\epsilon}.
\end{aligned}
$$
The first line follows from the definition of $\Vert . \Vert_E$, the second from the fact that $\Vert |\sigma) \Vert_E\leq 1$ for all $|\sigma)$, the third from the symmetry of the inner product $[.,.]$ and the last from the definition of self-duality.
\end{proof}

The above lemma states that if one outcome of a two-outcome measurement occurs with high probability on some state, then the post-measurement state after getting that outcome is ``close'' to the original state. We are now in a position to state the main result of this section. Before we do, let us fix the accepting criterion for computation with advice and make the simplifying assumption that the accept/reject measurement consists of pure effects.

\begin{theorem}\label{advicetheorem}
Any causal, bit-symmetric, tomographically local theory $\bold{G}$ with at least two pure and distinguishable states satisfies
$$\bold{BGP/gpoly} \subseteq \bold{PostBGP/poly} \subseteq \bold{PP/poly}$$
\end{theorem}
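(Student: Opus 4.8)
The plan is to prove the two inclusions separately. The second one, $\bold{PostBGP/poly}\subseteq\bold{PP/poly}$, is essentially immediate: the argument of \cite{LB-2014} establishing Theorem~\ref{post} simulates a post-selected $\bold{G}$-circuit by a post-selected quantum circuit uniformly in the circuit's classical description, so it goes through verbatim when both machines are handed the same polynomial classical advice string; combined with $\bold{PostBQP}=\bold{PP}$ \cite{Post} and the fact that $\bold{PP/poly}$ is closed under such advice-preserving reductions, this gives $\bold{PostBGP/poly}\subseteq\bold{PostBQP/poly}=\bold{PP/poly}$. All the work is therefore in the first inclusion, $\bold{BGP/gpoly}\subseteq\bold{PostBGP/poly}$, which I would obtain by adapting Aaronson's technique \cite{Aaronson-advice} for trading quantum advice for classical advice together with post-selection to the bit-symmetric setting.

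Fix $\mathcal{L}\in\bold{BGP/gpoly}$ with uniform circuit family $\{C_x\}$, auxiliary register of $d(n)$ systems, and advice states $\{\sigma_n\}$. First apply the standard amplification (which enlarges the advice register polynomially) so that $C_x$ run on $\sigma_n$ returns the correct verdict with probability at least $1-2^{-p(n)}$ for a polynomial $p$ to be fixed large. Since $\bold{G}$ is tomographically local and the gate set is finite, each system type has bounded dimension $K$, so the auxiliary register's state space has dimension at most $D(n):=K^{d(n)}$ with $\log D(n)$ polynomial in $n$. The $\bold{PostBGP}$ machine maintains a hypothesis state $|\tau_t)$ on the auxiliary register, starting from $|\tau_0)=|c)$, the completely mixed state, which exists and is uniformly preparable here (transitivity of the reversible group on pure states in bit-symmetric theories \cite{MU}, together with the standing assumption that $|c)$ is uniform, and with causality supplying the usual normalization facts). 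The classical advice is a list $(x_1,b_1),\dots,(x_T,b_T)$ with $T=\mathrm{poly}(n)$, $x_t\in\{0,1\}^n$, read as ``on input $x_t$ the correct verdict is $b_t$, and the current hypothesis errs there''. At stage $t$ the machine runs (amplified) $C_{x_t}$ on $|\tau_{t-1})$, post-selects the final accept/reject measurement --- a pure-effect two-outcome measurement, by the standing assumption --- on outcome $b_t$, and takes the resulting post-measurement state on the auxiliary register as $|\tau_t)$; at the end it runs $C_x$ on $|\tau_T)$ and outputs its verdict.

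Correctness rests on a potential-function argument with $\Phi_t=[\,|\sigma_n),|\tau_t)\,]$, the self-dual inner product between the true advice and the hypothesis. A short computation shows: (i) $\Phi_0$ is at least inverse-exponential in $n$ (using that the completely mixed state is refined by every state, and that every normalized state has Euclidean norm bounded below, via $\Vert\cdot\Vert_{phy}\le c\Vert\cdot\Vert_E$); (ii) self-duality makes all state--state inner products non-negative; and (iii) by Lemma~\ref{as good as new} the true advice $|\sigma_n)$ is disturbed by at most $c\sqrt{2\cdot2^{-p(n)}}$ in $\Vert\cdot\Vert_{phy}$ under the ``correct'' post-selected outcome. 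Together these give that whenever $|\tau_{t-1})$ genuinely errs on $x_t$, the post-selection multiplies $\Phi_t$ by a constant factor $>1$ up to an additive error of order $c\,2^{-\Omega(p(n))}$. Since $\Phi_t\le 1$, choosing $p$ large compared with $\log D(n)$ forces the process to reach, after $T=O(\log D(n))=\mathrm{poly}(n)$ stages, a hypothesis that errs on no length-$n$ input, and such a hypothesis decides $\mathcal{L}$ correctly by construction. The same potential bound shows the post-selected event at stage $t$ has probability at least $\Phi_{t-1}/2\ge 1/(2D(n))$ on the hypothesis, so the overall post-selection probability is at least $(2D(n))^{-T}$, which is inverse-exponential with constant base and hence admissible for $\bold{PostBGP}$; all remaining bookkeeping is polynomial.

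I expect the main obstacle to be making the ``carry the hypothesis forward'' step rigorous in a general theory: a generalised circuit $C_x$ acts on the auxiliary register together with freshly prepared ancillas and interleaves transformations with measurements, and --- unlike in quantum theory --- there need be no reversible dilation available to uncompute and recover a clean auxiliary register after a stage. One must therefore restructure the verification circuit, presumably by deferring the ancilla measurements and keeping the auxiliary-register wires un-measured throughout and then conditioning on the deferred outcomes, so that a well-defined post-measurement state on the auxiliary register is available to feed into the next stage, and then propagate the $c\sqrt{2\epsilon}$-type estimates of Lemma~\ref{as good as new} through this restructuring so that the accumulated error stays below the slack left by amplification. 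The roles of self-duality, the inequality $\Vert\cdot\Vert_{phy}\le c\Vert\cdot\Vert_E$, and Lemma~\ref{as good as new} are precisely to supply, in a bare bit-symmetric theory, the Hilbert-space geometry (Cauchy--Schwarz, the gentle measurement lemma, norm estimates on the completely mixed state) on which Aaronson's original argument depends.
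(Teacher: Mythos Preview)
Your proposal follows the same high-level strategy as the paper: adapt Aaronson's ``boost from the completely mixed state'' technique \cite{Aaronson-advice}, using bit-symmetry to supply the completely mixed state and Lemma~\ref{as good as new} to control disturbance, then derive the second inclusion from Theorem~\ref{post} together with $\bold{PostBQP}=\bold{PP}$. The classical advice in both cases is a list of inputs (with their correct verdicts) on which successive hypothesis states fail.

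The one substantive difference is the termination argument. You track the self-dual inner product $\Phi_t=[\,|\sigma_n),|\tau_t)\,]$ as a potential and assert it grows by a constant multiplicative factor at each stage. The paper instead argues directly via the \emph{refinement} structure: since the true advice $|\sigma)$ refines the completely mixed state $|c)$ with some weight $p\approx c/d^{w(|x|)}$, the probability of post-selecting on the correct answer at every one of the $t$ stages is simultaneously at least $p\,(1-o(1))$ (follow the $|\sigma)$-branch and apply Lemma~\ref{as good as new}) and at most $(1/3)^t$ (by construction each chosen $x_i$ is one on which the current hypothesis fails), forcing $t=O(w(|x|))$. This comparison-of-probabilities argument sidesteps any need to analyse how the inner product $[\,|\sigma_n),\cdot\,]$ transforms under the post-measurement update rule specified before Lemma~\ref{as good as new}; your claim that ``post-selection multiplies $\Phi_t$ by a constant factor $>1$'' does not obviously follow from (i)--(iii) alone and would need a genuine computation in the bit-symmetric setting, whereas the refinement bookkeeping makes the bound fall out almost for free. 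Otherwise the two arguments line up, including the acknowledged soft spot about carrying the auxiliary-register state forward between stages.
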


The above theorem states that in theories with non-trivial dynamics, there is a bound to how much useful information one an extract from any state. This result provides evidence for the existence of a trade-off between states and dynamics and can be seen as a natural converse to the results of \cite{Rev,Rev1,Rev2}. Our proof is a slight variation of the original proof in the quantum case, due to Aaronson \cite{Aaronson-advice}.

\begin{proof}
Begin by amplifying the success probability of $\bold{BGP/gpoly}$ on input $x$ from $2/3$ to $1-1/2^{q(|x|)}$. This is achieved by running a polynomial number of copies of the circuit $C_x$ in parallel and taking the majority answer. Note that in this amplification scheme the total advice state is the (vector space) tensor product of advice states for each individual circuit. Recall that the completely mixed state $|c)$ is assumed to satisfy uniformity and that there exists a non-zero probability $p$ such that $p|\sigma)$ is a refinement of $|c)$, for any $|\sigma)$. Uniformity implies that $p$ can be well approximated by some rational $c/d^{w(|x|)}$, for $c$ an integer and $d$ a polynomial in the size of the input $x$ (see the proof of Theorem $14$ in appendix B of \cite{LB-2014} for a more in-depth discussion of uniformity).  
 
Given any language $\mathcal{L}\in\bold{BGP/gpoly}$ we now construct a $\bold{PostBGP/poly}$ algorithm that decides $\mathcal{L}$. Given some $x$, use the completely mixed state as the advice to the circuit $C_x$. Now, from the definition of $\bold{BGP/gpoly}$, if $|c)$ cannot be used as advice to determine $x\in\mathcal{L}$, the circuit accepts with probability less than $1/3$. Consider the the post-measurement state $|c')$ of the auxiliary register after running $C_x$ with advice $|c)$ \emph{post-selecting} on the event that we succeeded in outputting the correct answer. If $|c')$ cannot be used as advice for all inputs, there exists some $x'$ such that $C_{x'}$ succeeds with probability less than $1/3$. As before, consider the post-measurement state of the auxiliary register after running $C_{x'}$ with advice $|c')$ post-selecting on outputting the correct answer. Continue in this fashion for some $t(|x|)$ stages, $t$ a polynomial. Successful post-selection is guaranteed as the actual advice state refines $|c)$ with probability $c/d^{w(|x|)}$. 

If, at any iteration of this process, we cannot find an $x$ to move forward, we must be holding a state that works as advice for every input, and we can use it to run $C_z$ on any input $z$, succeeding with high probability. Thus if the process halts after a polynomial number of iterations, we are done.

%In this situation, our algorithm post-selects on getting the correct answer, i.e. the answer we would have obtained had we used the correct advice state and not $|c)$. 

If the correct advice state $|\sigma)$ had been used in the computation, Lemma~(\ref{as good as new}) would imply the post-measurement state on observing the accept outcome, $|\sigma_{acc})$, would -- under the simplifying assumption that the accept/reject measurement consists of pure effects -- satisfy:
$$ \Vert |\sigma)-|\sigma_{acc}) \Vert_{phy} \leq c\sqrt{\frac{1}{2^{q(|x|)-1}}}. $$
As the completely mixed state $|c)$ is uniform, it follows that $c=\Vert |c) \Vert_E\leq O(2^{m(|x|)})$ for $m$ a polynomial. Therefore, $c/\sqrt{2^{q(|x|)-1}}=o(1)$. We thus have  
$$ \Vert |\sigma)-|\sigma_{acc}) \Vert_{phy} \leq o(1).$$
Therefore on each iteration of the above process, the correct answer is output with probability $$\frac{c}{d^{w(|x|)}}\left(1-o(1)\right).$$

This process has been designed so that the probability that $|c)$ can be re-used on each iteration and succeed at each stage is at most $1/3^{t(|x|)}$. Therefore, we have that $$\frac{c}{d^{w(|x|)}}\left(1-o(1)\right) \leq 1/3^{t(|x|)}.$$ Thus $t(|x|) \leq O(w(|x|))$ and we are done.
 
There thus exists a polynomial number of $x_1,\dots,x_t$ such that, if $|a)$ is the post-measurement state after we start with $|c)$ and post-select on succeeding on each $x_i$ in turn, $|a)$ is a good advice state for every string $z$. Provide the algorithm with this sequence of classical strings, along with the correct outcomes $b,\dots,b_t$ for each of them. The algorithm then prepares $|c)$, uses it as advice and post-selects on getting outcomes $b,\dots,b_t$. After this process we obtain the state $|a)$ and so all languages that can be decided in $\bold{BGP/gpoly}$ can also be decided in $\bold{PostBGP/poly}$ and thus, by tomographic locality and Theorem \ref{post}, in $\bold{PP/poly}$.
%Now consider the case where $|c)$ was input as advice and denote the post-measurement state on observing the accept outcome by $|c')$. If $|c')$ can't be used as advice for all states, there exists some $x'$ such that the circuit $C_x'$ accepts with probability less than $1/3$ when given $|c')$ as advice. As the 
\end{proof}

\section{Bounds on the power of proofs in physical theories} \label{trade-off} 

In this section we will put a non-trivial bound on $\bold{GMA}$. To state our result, the notion of a $\bold{GapP}$ function must be introduced. Given a poly-time non-deterministic Turing Machine $n$ and input string $x$, let $N_{acc}(x)$ be the number of accepting computation paths of $N$ given input $x$, and $N_{rej}(x)$ the number of rejecting computation paths of $N$ given $x$. A function $f:\{0,1\}^{*}\rightarrow\mathbb{Z}$ is a $\bold{GapP}$ function if there exists a polynomial-time non-deterministic Turing Machine $N$ such that $f(x)=N_{acc}(x)-N_{rej}(x)$, for all input strings $x$. We can now define the class $\bold{A_0PP}$.

\begin{definition}
A language $\mathcal{L}$ is in the class $\bold{A_0PP}$ if and only if there exists a $\bold{GapP}$ function $f$ and an efficiently computable function $T$ such that
\begin{enumerate}
\item for all $x\in\mathcal{L}$ $f(x) \geq T(x)$ and;

\item for all $x\notin\mathcal{L}$ we have $0 \leq f(x) \leq \frac{1}{2}T(x)$ 
\end{enumerate}
\end{definition}

It has been shown that the above class is contained in $\bold{PP}$. Fix the efficient acceptance condition for proof verification so that, in all uniform circuits, the measurement applied at the end of the computation to the auxiliary register consists of only unit effects. We make this choice to move closer to the standard quantum acceptance condition. We also make the simplifying assumption -- routinely made in the literature -- that all mathematically allowed states are physically allowed. That is, all vectors whose inner product with any effect is in $[0,1]$ correspond to physical states.

\begin{theorem} \label{main}
For any generalised probabilistic theory  $\bold{G}$ satisfying causality, tomographic locality and the assumption that all mathematically allowed states are physically allowed, we have that $$\bold{GMA}\subseteq\bold{A_0PP}\subseteq\bold{PP}.$$ 
\end{theorem}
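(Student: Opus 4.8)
The plan is to carry out the operational analogue of the argument of \cite{APP} for $\bold{QMA}\subseteq\bold{A_0PP}$: after amplification, the largest acceptance probability over all proofs should play the role that the top eigenvalue $\lambda_{\max}$ of the quantum ``accept'' operator plays there, and the goal is to sandwich this number between a $\bold{GapP}$-computable quantity $\Phi_x$ and $2^{\mathrm{poly}(|x|)}\cdot\Phi_x$. So the first step is to pass to an amplified verifier, using the adaptation of the Kitaev--Watrous procedure described in Section~\ref{general-def} (which uses the hypothesis that every mathematically allowed state is physical), so that for a polynomial $r$ of our choosing the best acceptance probability over proof states of the still polynomial-size auxiliary register is $\ge 1-2^{-r(|x|)}$ when $x\in\mathcal L$ and $\le 2^{-r(|x|)}$ when $x\notin\mathcal L$.

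Next I would linearise the verifier: since effects act linearly on states, contracting $C_x$ with everything but the auxiliary register turns the acceptance probability on a proof $\sigma$ into $(R_x|\sigma)$ for a fixed effect $(R_x|$ of that register, with $(R_x|\sigma)\in[0,1]$ for every state $\sigma$. By tomographic locality and the physicality assumption, the register's state space is the tensor product of the fixed single-system state spaces of the gate set, hence carries a product basis of normalised states $\{|v_j)\}_{j=1}^{D}$ with $D\le 2^{p(|x|)}$; set $\Phi_x:=\sum_{j=1}^{D}(R_x|v_j)\ge 0$. Running $C_x$ with the fixed state $|v_j)$ plugged in is a closed circuit, so exactly as in the Lee--Barrett proof that $\bold{BGP}\subseteq\bold{AWPP}$ \cite{LB-2014} its acceptance probability is a $\bold{GapP}$ function divided by a power of two, up to an exponentially small additive error from the rational approximation of the gate matrices. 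Because $\bold{GapP}$ is closed under polynomial products and under sums over a polynomially indexed set, the cleared-denominator rational approximation of $\Phi_x$ is, after a tiny integer shift absorbing the error and forcing non-negativity, a genuine $\bold{GapP}$ function.

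The heart of the argument is the sandwich estimate. If $x\notin\mathcal L$, each $|v_j)$ is a legitimate proof, so $(R_x|v_j)\le 2^{-r}$ and $\Phi_x\le D\,2^{-r}\le 2^{p-r}$. If $x\in\mathcal L$, take an accepting proof $\sigma^\star$ and write $\sigma^\star=\sum_j c_j|v_j)$; the coefficients are $c_j=(f_j|\sigma^\star)$ for the dual product basis of the effect space, and expanding each single-system dual functional in a fixed basis of bona fide single-system effects rewrites $(f_j|\sigma^\star)$, via tomographic locality, as a bounded combination of values of genuine product effects on $\sigma^\star$, each in $[0,1]$; this gives $|c_j|\le B$ with $B\le 2^{b(|x|)}$ for a polynomial $b$ depending only on the gate set. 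Since $(R_x|v_j)\ge 0$, $1-2^{-r}\le(R_x|\sigma^\star)=\sum_j c_j(R_x|v_j)\le B\,\Phi_x$, so $\Phi_x\ge(1-2^{-r})/B$. Taking $r:=p+b+2$ makes the ``yes'' lower bound at least $2^{-b-1}$ and the ``no'' upper bound at most $2^{-b-2}$, so with $f(x)$ the $\bold{GapP}$ function above and $T(x)$ an efficiently computable integer threshold of size $\approx 2^{M-b}$ (where $2^{M}$ is the cleared denominator) one gets $f(x)\ge T(x)$ for $x\in\mathcal L$ and $0\le f(x)\le\tfrac12 T(x)$ for $x\notin\mathcal L$, i.e. $\mathcal L\in\bold{A_0PP}$; finally $\bold{A_0PP}\subseteq\bold{PP}$ is known \cite{APP}.

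The step I expect to be the real obstacle is the lower half of the sandwich: with no spectral theorem available, the quantum inequality $\lambda_{\max}(Q)\le\mathrm{Tr}(Q)\le D\,\lambda_{\max}(Q)$ must be replaced by the purely convex-geometric fact that the coefficients of an arbitrary normalised state of a poly-size composite register, expanded in a fixed product basis, are bounded by $2^{\mathrm{poly}}$; pinning this down cleanly is where tomographic locality (for the tensor structure) and the standing assumption on states are used, together with some care that the functionals used to read off coefficients really are actual product effects built from fixed single-system effects. A secondary, more clerical, difficulty is propagating the exponentially small gate-approximation errors through the $\bold{GapP}$ construction so that the exact non-negativity and the ratio demanded by $\bold{A_0PP}$ both survive, which is handled just as in the $\bold{BGP}\subseteq\bold{AWPP}$ argument.
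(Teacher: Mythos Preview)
Your argument is correct and takes a genuinely different route from the paper's. The paper never linearises to an effect and never sums over a product basis of states; instead it works directly with the (square, zero-padded) matrix $M_x$ of the verifier circuit, shows $(u|M_x|\rho)\le\sigma_{\max}(M_x)$ by the singular value decomposition, and takes $f(x)=\mathrm{Tr}\bigl((M_x^TM_x)^d\bigr)$ as the $\bold{GapP}$ function, using the elementary sandwich $\sigma_{\max}^{2d}\le f(x)\le 2^n\sigma_{\max}^{2d}$. The delicate half is the ``no'' case, where one must argue $\sigma_{\max}\le 1/3$: the paper does this by an affine reparametrisation of the state space so that the Euclidean unit ball sits inside it, whence the top right singular vector becomes a genuine physical state and the soundness bound applies---this is exactly where the hypothesis that all mathematically allowed states are physical is spent. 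By contrast you replace the spectral sandwich with the convex-geometric one $\tfrac{1}{B}\max_\sigma(R_x|\sigma)\le\Phi_x\le D\max_\sigma(R_x|\sigma)$, which is more elementary (no SVD, no reparametrisation) and stays closer to the operational meaning of states and effects; the price is that you need amplification and the coefficient bound $|c_j|\le 2^{b(|x|)}$, whose proof via tomographic locality and product effects is the real content of your argument. A small point: the amplification procedure the paper invokes is asserted to work for arbitrary $\bold{G}$ and does not use the ``all mathematically allowed states are physical'' assumption, so your attribution of the hypothesis to that step is off---in fact, on its face your argument does not seem to require the hypothesis at all, which would be a mild strengthening worth checking carefully.
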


\begin{proof}

Recall that any matrix $M$ has a singular value decomposition given by $M=UDV^T$, where $U,V$ are unitary (orthogonal if the matrix is real) matrices, $V^T$ is the transpose of $V$ and $D$ is a diagonal matrix. The diagonal entries of $D$ are all non-negative real numbers and are called the \emph{singular values} of the matrix $M$. Note that the eigenvalues of the matrix $M^TM=VD^TDV^T$ are the squares of the singular values of $M$. 

Let $M_x$ be the matrix representation of the uniform circuit, including states and effects on the non-auxiliary register, on input $x$, $(u|$ be the (tensor product) of unit effects applied on the auxiliary register and $|\rho)$ be any arbitrary state (which can be non-uniform) input to the auxiliary register. Without loss of generality, one can pad this matrix (and row and column vector) with rows and columns of zeros to ensure it is square. The probability that the circuit accepts the string $x$ is given by $(u|M_x|\rho)$. It will now be shown that this probability is upper bounded by the largest singular value of the matrix $M_x$. Consider the following
$$(u|M_x|\rho)=(u|UDV^T|\rho) \leq \sigma_{max} (u|UV^T|\rho),$$
where $\sigma_{max}$ is the largest singular value of $M_x$. Now $UV^T$ is a unitary matrix and so can be decomposed as follows $UV^T=WD'W^T$, where $W$ is another unitary matrix and $D'$ is a diagonal matrix consisting of the eigenvalues of $UV^T$, recall that these eigenvalues all have absolute value $1$. Thus,
$$ (u|M_x|\rho) \leq \sigma_{max} (u|WD'W^T|\rho) \leq \sigma_{max} (u|\rho) \leq \sigma_{max},$$
where the second inequality follows from that fact that the entries of $D'$ have absolute value $1$ and that $W$ is unitary and the third inequality follows as $(u|\rho) \leq 1$.

Now as the squares of the singular values are the eigenvalues of the (positive definite) matrix $M_x^TM_x$, we have that 
$$(\sigma_{max}^2)^d \leq \mathrm{Tr}\big((M_x^TM_x)^d\big) \leq 2^n (\sigma_{max}^2)^d,$$
where $2^n$ is the number of entries on the diagonal of $M_x^TM_x$, $n$ is a polynomial in $|x|$ an $d$ is an arbitrary natural number. Let $d$ be a polynomial in $|x|$ that takes values in the natural numbers and assume without loss of generality that it grows faster than the polynomial $n$, we will need this requirement later. 

The matrix $M_x$ satisfies the uniformity condition, and it was shown in \cite{LB-2014} that the entries of all such matrices are $\bold{GapP}$ functions. By the closure properties of $\bold{GapP}$ (again see appendix B of \cite{LB-2014}) functions the entries in the matrix $(M_x^TM_x)^d$ are also $\bold{GapP}$ functions. Using an argument similar to that in \cite{APP}, $\mathrm{Tr}\big((M_x^TM_x)^d\big)$ can be straightforwardly shown to be a $\bold{GapP}$ function, denote it by $f(x)$. So, from the definition of $\bold{GMA}$, we have that $f(x) \geq \sigma_{max}^{2d} \geq \big(\frac{2}{3}\big)^{2d}$ for all $x$ in the language. 

Now the vector that achieves the bound of $\sigma_{max}$ is the right singular vector of $M_x$ with singular value $\sigma_{max}$, which we denote by $|\sigma)$. If this vector is a physical state then we are done, as it follows from the definition of $\bold{GMA}$ and an argument similar to the one above that $f(x) \leq \frac{1}{2}\big(\frac{2}{3}\big)^{2d}$ for all $x$ not in the language. If this vector is not a physical state then we have a bit more work to do. 

Towards this end, consider the following. We are free to re-parametrise (e.g. see page $7$ of \cite{ML}) the set of states by an affine transformation $\phi: \mathbb{R}^m\rightarrow\mathbb{R}^m$, where $\mathbb{R}^m$ is the (smallest) real vector space that contains the set of states, as follows: 
$$\begin{aligned} &|\rho)\rightarrow|\widetilde{\rho})=\phi|\rho), \quad (a|\rightarrow(\widetilde{a}|=(a| \phi^{-1} \\  &\quad \mathrm{and,} \ \quad M_x\rightarrow\widetilde{M_x}=\phi M_x \phi^{-1} \end{aligned}$$
as this does not change the probabilities, i.e. $(a|M_x|\rho)=(\widetilde{a}|\widetilde{M_x}|\widetilde{\rho})$. Now, as an affine transformation is just a translation followed by a scaling, choose $\phi$ so that the Euclidean unit ball is contained in the re-parametrised state space (just translate the original state space and scale it appropriately to ensure this, noting that translations and scaling are reversible). As the singular vectors of every matrix are unit vectors, without loss of generality they are contained in this unit ball. Under the assumption that all mathematically allowed states are physically allowed ensures these singular vectors are physical states. Thus $\sigma_{max}=(\widetilde{u}|\widetilde{M_x}|\sigma),$  
where $(\widetilde{u}|$ is the unique deterministic effect. The causality principle ensures that any state $|\widetilde{s})$ can be scaled so that $(\widetilde{u}|\widetilde{s})=1$, see e.g. \cite{Pavia1}.  So for $x$ not in the language we have $\sigma_{max}=(\widetilde{u}|\widetilde{M_x}|\sigma)\leq 1/3$.

It follows that    
$$f(x) \leq 2^n \sigma_{max}^{2d} \leq 2^n \Big(\frac{1}{3}\Big)^{2d} \leq \frac{1}{2}\Big(\frac{2}{3}\Big)^{2d},$$
where the first inequality follows from $\mathrm{Tr}\big((M_x^TM_x)^d\big) \leq 2^n (\sigma_{max}^2)^d$ and the last inequality follows from the fact that, for $d$ increasing sufficiently faster than $n$, we have $2^{n+1}\leq 4^d$.  

Thus, for a language $\mathcal{L}$ in $\bold{GMA}$ we have    
\begin{enumerate}
\item for all $x\in\mathcal{L}$ there exists a $\bold{GapP}$ function $f$ such that $f(x) \geq \big(\frac{2}{3}\big)^{2d}$ and;

\item for all $x\notin\mathcal{L}$ we have $f(x) \leq \frac{1}{2}\Big(\frac{2}{3}\Big)^{2d},$
\end{enumerate}
and so we have that $\bold{GMA}\subseteq\bold{A_0PP}$.  

\end{proof}

\section{Relating proofs and advice?}  \label{optimal} 

%It is interesting to wonder whether quantum theory achieves the optimal trade-off among theories in the operational framework. If such a claim could be shown, it would provide an intriguing answer to the question of why Nature chose our universe to be quantum mechanical; Nature likes balance. One approach to proving this statement would be to show that $\bold{GMA}\subseteq\bold{QMA}$ for all theories $\bold{G}$. Proving such a general statement is likely to be quite difficult.  

The following relation, discussed in section \ref{general-def}, $\bold{BQP/qpoly}\subseteq\bold{QMA/poly},$ captures an intriguing feature of proofs and advice in quantum theory: one can always replace quantum advice with classical advice together with a quantum proof. Here we study the relation
\begin{equation}\label{desirable}
\bold{BGP/gpoly}\subseteq\bold{GMA/poly},
\end{equation}
in general theories. %and provide some slight evidence that it captures a notion of a ``desirable trade-off'' in theories satisfying it.  
Note that the relation is satisfied in classical computation:
\begin{equation}
\bold{BPP/rpoly}=\bold{P/poly}\subseteq\bold{NP/poly}\subseteq\bold{MA/poly},\nonumber
\end{equation}
where $\bold{BPP/rpoly}=\bold{P/poly}$ was shown in \cite{de, Adl78}. Clearly the relation in \eqref{desirable} is then not uniquely satisfied by quantum theory, but one could ask whether quantum theory is the most computationally powerful theory in which \eqref{desirable} is satisfied? 

Using these observations as motivation we obtain the following corollary of Theorem~\ref{main}.
\begin{Corollary}
There exist general theories $\bold{G}$ satisfying tomographic locality and causality such that
$\bold{BGP/gpoly}\nsubseteq\bold{GMA/poly}$.
\end{Corollary}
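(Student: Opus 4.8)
The plan is to exhibit Boxworld (with or without reversible dynamics) as the desired witness $\bold{G}$. By Theorem~\ref{thmboxworld} we already know that Boxworld satisfies causality and tomographic locality, and that $\bold{BGP/gpoly} = \bold{ALL}$, the class of \emph{all} decision problems. So it suffices to show that $\bold{GMA/poly}$ is a proper subclass of $\bold{ALL}$ for Boxworld --- equivalently, that $\bold{GMA/poly}$ for Boxworld is bounded by some ``reasonable'' complexity class and hence cannot contain, say, the halting problem or any undecidable language.

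First I would observe that $\bold{GMA/poly}$ for Boxworld is contained in $\bold{A_0PP/poly}$, or more crudely in $\bold{PP/poly}$, by the same reasoning as Theorem~\ref{main}. The point is that the proof of Theorem~\ref{main} is entirely a statement about the circuit matrices $M_x$ and the effects applied to the auxiliary register; it does not care whether the verifier additionally receives a polynomial-length \emph{classical} advice string. With classical advice of length $p(|x|)$ one simply has a family of matrices $M_{x,a}$ indexed by $x$ and the advice bits; since the advice enters the Turing-machine description of the circuit uniformly (for fixed advice), the entries of $M_{x,a}$ are still $\bold{GapP}$ functions relative to an oracle encoding the advice, and the whole argument --- singular-value bound, trace estimate, re-parametrisation so the Euclidean unit ball sits inside the state space, $\bold{GapP}$ closure --- goes through verbatim to give $\bold{GMA/poly} \subseteq \bold{A_0PP/poly} \subseteq \bold{PP/poly}$. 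Boxworld satisfies the hypotheses of Theorem~\ref{main} (causality, tomographic locality, and --- under the standard assumption --- that all mathematically allowed states are physical), so this applies.

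Then I would finish by a counting/diagonalisation argument: $\bold{PP/poly}$ does not contain all decision problems. Indeed $\bold{PP/poly}$ contains only countably many languages per advice-length growth rate in a sense that can be made precise --- more simply, $\bold{PP/poly} \subseteq \bold{P/poly}^{\bold{PP}}$-style bounds show it is a proper subclass of $\bold{ALL}$; concretely, any undecidable-but-not-``advice-compressible'' language, or just a language of high enough Kolmogorov complexity relative to its length, lies outside $\bold{PP/poly}$. (Even more elementarily: $\bold{PP/poly}$ has at most $2^{\mathrm{poly}(n)}$ languages when restricted to inputs of length $n$ with a fixed polynomial advice bound, while $\bold{ALL}$ has $2^{2^n}$, so a simple diagonalisation produces a language in $\bold{ALL} \setminus \bold{PP/poly}$.) Combining, for Boxworld we get $\bold{BGP/gpoly} = \bold{ALL} \nsubseteq \bold{PP/poly} \supseteq \bold{GMA/poly}$, hence $\bold{BGP/gpoly} \nsubseteq \bold{GMA/poly}$.

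The main obstacle is the first step: checking carefully that Theorem~\ref{main}'s argument is robust to the addition of polynomial classical advice. The subtlety is that the non-uniform advice string must not break the $\bold{GapP}$-ness of the matrix entries; this is fine because $\bold{GapP}$ is closed under the relevant operations and the advice, being of polynomial length, can be hard-wired into the non-deterministic machine's description --- one obtains a $\bold{GapP/poly}$ function and $\bold{A_0PP/poly} \subseteq \bold{PP/poly}$ still holds. A secondary, more cosmetic point is that one should state precisely \emph{which} undecidable (or merely non-$\bold{PP/poly}$) language is used as the separating witness; invoking the crude cardinality bound on $\bold{PP/poly}$ restricted to each input length sidesteps any delicacy here and keeps the corollary's proof to a few lines.
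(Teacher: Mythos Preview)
Your proposal is correct and follows essentially the same route as the paper: invoke Theorem~\ref{main} (relativised to classical advice) to get $\bold{GMA/poly}\subseteq\bold{PP/poly}$, use a counting argument to show $\bold{PP/poly}\subsetneq\bold{ALL}$, and cite Theorem~\ref{thmboxworld} for the Boxworld witness with $\bold{BGP/gpoly}=\bold{ALL}$. You are simply more explicit than the paper about why the $\bold{GapP}$ machinery survives the addition of polynomial classical advice and about the cardinality argument, both of which the paper states in a single clause.
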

\begin{proof}
Firstly, we can use Theorem~(\ref{main}) to conclude that $\bold{GMA/poly}\subseteq\bold{PP/poly}$ and by a counting argument $\bold{PP/poly}$ is strictly contained in $\bold{ALL}$. From Theorem~(\ref{thmboxworld}), there exists a theory $\bold{G}$ such that $\bold{ALL}=\bold{BGP/gpoly}$ and so we do not have $\bold{BGP/gpoly}\subseteq\bold{GMA/poly}\subseteq\bold{PP/poly}$ for this theory. 
\end{proof}

Motivated by the above corollary we can say something non-trivial about theories where $\bold{BGP/gpoly}\nsubseteq\bold{GMA/poly}$. Consider the case of using a polynomially-sized circuit from a specific theory, built from any fixed gate set in that theory, to prepare an arbitrary, but polynomially large, state in the theory.  Given this set-up, we can prove the following result. 
\begin{theorem}
In any general theory $\bold{G}$ with $$\bold{BGP/gpoly}\nsubseteq\bold{GMA/poly}$$ there exist states (of polynomial size) that cannot be prepared using an efficient circuit built from any gate set in the theory.
\end{theorem}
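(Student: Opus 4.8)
The plan is to argue by contraposition: I will assume that \emph{every} polynomial-size family of states in $\bold{G}$ can be prepared by a polynomial-size circuit drawn from some fixed finite gate set of $\bold{G}$, and deduce $\bold{BGP/gpoly}\subseteq\bold{GMA/poly}$ --- contradicting the hypothesis on $\bold{G}$. In fact the argument yields the stronger chain $\bold{BGP/gpoly}\subseteq\bold{BGP/poly}\subseteq\bold{GMA/poly}$, where the last inclusion holds because, just as $\bold{BGP}\subseteq\bold{GMA}$ follows directly from the definitions (take the auxiliary register to be empty), so does its relativisation by classical advice. Intuitively, the content of the step is that if the advisor's state can be manufactured by the verifier itself, then neither a trusted advisor nor an untrusted prover is of any use.

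For the main inclusion, let $\mathcal{L}\in\bold{BGP/gpoly}$ be decided by a poly-size uniform circuit family $\{C_x\}$ over a finite gate set $\mathcal{G}$ together with advice states $\{\sigma_{n}\}$, where $\sigma_n$ lives on a composite system of polynomial size $d(n)$. By assumption there is a polynomial-size family of circuits $\{D_n\}$, built from some finite gate set $\mathcal{G}_\sigma$ meeting the usual uniformity conditions on gate matrices, such that $D_n$ prepares $\sigma_n$ on its $d(n)$ output ports. Since $D_n$ has polynomially many gates from a finite set, together with a polynomial-size wiring diagram, it admits a classical description of length polynomial in $n$; I hand this description to a $\bold{BGP/poly}$ machine as its classical advice. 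On input $x$ with $|x|=n$, the Turing-machine front end reads the description of $D_n$ and emits a description of the composite circuit obtained by feeding the output ports of $D_n$ into the auxiliary register of $C_x$; this composite circuit is poly-size and drawn from the finite gate set $\mathcal{G}\cup\mathcal{G}_\sigma$, hence is a legitimate $\bold{BGP/poly}$ circuit. Running it merely reproduces the computation of $C_x$ on input advice $\sigma_n$, so it accepts with probability at least $2/3$ when $x\in\mathcal{L}$ and at most $1/3$ otherwise. Thus $\mathcal{L}\in\bold{BGP/poly}$; since $\mathcal{L}$ was arbitrary, $\bold{BGP/gpoly}\subseteq\bold{BGP/poly}\subseteq\bold{GMA/poly}$, contradicting the hypothesis, so some polynomial-size family of states in $\bold{G}$ is not preparable by any efficient circuit over any finite gate set.

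The delicate point --- and what I expect to be the main obstacle --- is the precise reading of ``$D_n$ prepares $\sigma_n$''. If the gates of $\mathcal{G}_\sigma$ carry classical pointers, a circuit may in general output a desired state (in particular a pure one) only after conditioning on an outcome that occurs with probability strictly below $1$, and post-selection of this kind is not available inside a $\bold{BGP}$ computation, so the embedding of $D_n$ into $C_x$ would then fail to be a legal $\bold{BGP/poly}$ circuit. I would dispose of this by fixing at the outset the operationally natural deterministic notion of preparability --- $D_n$ is branch-free, or, more generally, its output register equals $\sigma_n$ conditioned on an outcome attainable with certainty (as is the case for pure states in the standard theories, e.g.\ a reversible circuit applied to a fixed product state) --- which is also the convention implicitly used elsewhere in the paper, e.g.\ in the assumption that the completely mixed state is prepared by a uniform, deterministic procedure. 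With this convention the composite circuit in which $D_n$ feeds its output into the auxiliary register of $C_x$ is manifestly a valid $\bold{BGP/poly}$ computation, and the argument above goes through verbatim.
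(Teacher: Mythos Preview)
Your proof is correct and follows essentially the same route as the paper: both argue by contradiction, replacing the general advice state by a classical description of an efficient circuit that prepares it, to obtain the chain $\bold{BGP/gpoly}\subseteq\bold{BGP/poly}\subseteq\bold{GMA/poly}$. You are in fact more careful than the paper in flagging the issue of deterministic versus post-selected preparation of $\sigma_n$, which the paper passes over in silence.
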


\begin{proof}
Assume toward contradiction that all states can be prepared using an efficient circuit built from any gate set in the theory. Thus, as there must exist a classical description of each circuit, any advice state from this theory can be replaced with the \emph{classical} advice that specifies the description of the circuit that efficiently prepares the given advice state. We thus have
$$\bold{BGP/gpoly}\subseteq\bold{BGP/poly}\subseteq\bold{GMA/poly},$$ which is a contradiction. There must therefore exist at least one state that cannot be prepared efficiently in this theory. 
\end{proof}

Thus in theories that do not satisfy $$\bold{BGP/gpoly}\subseteq\bold{GMA/poly},$$ the dynamics are not rich enough to prepare the states that contain a large amount of ``useful'' information. This is not to say that in theories satisfying this relation every state can be efficiently prepared, it is just that in theories violating the relation this assertion can be proved \emph{directly} from the violation. As a side remark, within the theorem proof we have proven that $\bold{BGP/poly}$ is \textit{strictly} contained in $\bold{BGP/gpoly}$ for theories $\bold{G}$ where $\bold{BGP/gpoly}\nsubseteq\bold{GMA/poly}$. It is presently unknown if quantum advice is strictly stronger than classical advice for quantum computers.

In addition to proving $\bold{BGP/gpoly}\subseteq\bold{GMA/poly}$, Aaronson and Drucker proved what they called a ``Quantum Karp-Lipton" theorem \cite{Aaronson-Drucker}. The Karp-Lipton theorem states that if $\bold{NP}\subseteq\bold{P/poly}$ then the polynomial hierarchy collapses to its second level, which is believed to be unlikely \cite{Karp-Lipton}. The Quantum Karp-Lipton theorem states that if $\bold{NP}\subseteq\bold{BQP/qpoly}$ then the second level of the polynomial hierarchy is contained in $\bold{QMA}^{\bold{PromiseQMA}}$ \footnote{Here $\bold{PromiseQMA}$ is the same as $\bold{QMA}$ except there is a ``promise" on the inputs, i.e. all the inputs satisfy some property.}, which is also thought to be unlikely \cite{Aaronson-Drucker}. We refer the reader to the original works for further details but we only wish to highlight that, due to Theorem~(\ref{thmboxworld}), there exist theories $\bold{G}$ where $\bold{NP}\subseteq\bold{BGP/gpoly}$ is necessarily satisfied. Therefore, we cannot obtain a ``Generalised Karp-Lipton" theorem where unlikely consequences are expected from assuming $\bold{NP}\subseteq\bold{BGP/gpoly}$.

\subsection{Related work}

Evidence for the existence of a general trade-off has also appeared in recent work which has considered theories satisfying the no-signalling condition from the point-of-view of interactive proofs. The Merlin-Arthur game is an example of an interactive proof. Another example is a multi-interactive prover ($\bold{MIP}$) system where more than one of these all-powerful provers sends classical bit-strings to a probabilistic classical computer verifier \cite{Ben-Or}. Just as in the Merlin-Arthur game, the provers cannot be trusted. However, these provers are not permitted to communicate with one another. A quantum generalisation of this is to allow the provers to share entangled quantum states. In work by Ito and Vidick \cite{Ito-Vidick}, in this quantum generalisation of $\bold{MIP}$ it is possible for the verifier to efficiently compute problems in the class $\bold{NEXP}$ which is the class of problems evaluated by a non-deterministic computer running in time exponential in the size of the input. However, recent work by Kalai, Raz and Rothblum \cite{NS} has shown that if the provers share resources that satisfy only the no-signalling principle (such as Boxworld), then the problems that can be solved in such a model are actually contained in the class $\bold{EXP}$. Since $\bold{EXP}\subseteq\bold{NEXP}$, in a theory with states more non-local than quantum mechanics these interactive proof systems have $\textit{less}$ computational power, unless $\bold{EXP}=\bold{NEXP}$.

\section{Discussion and conclusion}

The results in this paper provide another example where the best known upper bound on the quantum class $\bold{QMA}$ follows from very minimal assumptions on what constitutes an operational theory. This raises the question of whether better bounds can be derived in the quantum case by exploiting some of the structure unique to quantum theory. 

%\matty{We have given some evidence that the expression $\textbf{BQP/qpoly}\subseteq\textbf{QMA/poly}$ qualitatively picks out quantum theory as an optimal theory with respect to the trade-off between states and dynamics. The proof of this relation in \cite{Aaronson-Drucker} involves a lot of machinery and techniques that could be unique to quantum theory. Understanding which theories satisfy $\textbf{BGP/gpoly}\subseteq\textbf{GMA/poly}$ could begin to single out quantum theory.}

While the definitions of advice and proof verification presented in this paper can be applied to any theory in the framework, they seem to intuitively encode a notion of causality. Note that in a non-causal theory, circuits do not have any particular ``direction'' and so inputting a given state at the ``start'' of the computation is not the most natural situation one could consider. Instead of receiving an advice \emph{state}, a more natural situation might be to receive an advice \emph{circuit fragment} -- consisting of either a state, transformation or measurement -- which can be plugged into the circuit as it is being built. It would be interesting to determine if this more general definition coincides with the standard one in extensions of quantum theory with indefinite causal structure \cite{PaviaNoCause, BruknerNoCause}.  

On a final note, it would be fascinating to show if the analysis of computation in generalised probabilistic theories could say something concrete about quantum computing. In an analogous fashion, tools from quantum theory have been used to prove results in \textit{classical} computer science, see \cite{Review} for a nice review of such results. We speculate that by understanding quantum theory better within the framework of more general theories we can use tools from the latter to prove results in the former.
\newline

\subsection*{Acknowledgements} 
The authors acknowledge John H. Selby and all his charitable works, and thank Jon Barrett for discussions. CML acknowledges funding from the EPSRC and University College Oxford. MJH acknowledges support from the EPSRC (through the NQIT Quantum Hub) and the FQXi Large Grant \textit{Thermodynamic vs information theoretic entropies in probabilistic theories}.

\begin{appendices} 

\section{Classical case} \label{app}

The study of non-uniform classical computation begins with polynomial-sized Boolean circuits. These circuits can equivalently be viewed as Turing Machines that take polynomial-sized advice bit-strings. These strings only depend on the size of the input and not the input itself. If the string were to depend on every input then we could just encode the solution to any problem for that input and be able to decide any language. The class of decision problems that are solved by a (uniform) deterministic classical computer with classical advice is denoted $\bold{P/poly}$, where the suffix $\bold{/poly}$ denotes a classical advice bit-string.

\begin{definition}
$\bold{P/poly}$ is the class of languages $\mathcal{L}\subseteq\{0,1\}^n$ for which there exists a poly-time uniform classical circuit family $\{\mathcal{C}_x\}$ and a set of bit-strings $\{y_{n}\}_{n\geq 1}$ of length $d(n)$ for some polynomial $d$, such that for all strings $x\in\{0,1\}^n$, $x\in\mathcal{L}$ if and only if $\mathcal{C}_x$ accepts for $(x,y_{n})$ as input.
\end{definition}

Since we will be considering probabilistic processes in full generality, it is worth defining the relevant class of computation with advice where processes are not deterministic. In full generality, we allow the possibility that the advice bit-strings are sampled from a probability distribution for each input size -- we denote such advice as ``randomized advice" denoted by the suffix $\bold{/rpoly}$. In addition to this, we allow the uniform circuits to accept inputs with some error as is normal in efficient probabilistic computation (cf. the definition of $\bold{BGP}$). Therefore the class $\bold{BPP/rpoly}$ of problems solved (with some error) by a (uniform) classical circuit with randomized advice can now be defined.

\begin{definition}
$\bold{BPP/rpoly}$ is the class of languages $\mathcal{L}\subseteq\{0,1\}^n$ for which there exists a poly-time uniform classical circuit family $\{\mathcal{C}_x\}$ and a set of randomized advice bit-strings $\{y_{n}\}_{n\geq 1}$ of length $d(n)$ for some polynomial $d$, such that for all strings $x\in\{0,1\}^n$:
\begin{enumerate}

\item If $x\in\mathcal{L}$ then $\mathcal{C}_x$ accepts with probability at least $2/3$ given $(x,y_{n})$ as input.

\item If $x\notin\mathcal{L}$ then $\mathcal{C}_x$ accepts with probability at most $1/3$ given $(x,y_{n})$.

\end{enumerate}
\end{definition}

Interestingly, despite the ability to use probabilistic processes, via derandomisation arguments it can be shown that $\bold{BPP/rpoly}=\bold{P/poly}$ \cite{Adl78, de}.

In the case where an efficient computer is given a proof from some untrusted provider we have already mentioned the classical complexity class $\bold{NP}$ but this is not the most general class for probabilistic computation. If the efficient classical computer accepts some input with some error, then this is in the remit of Merlin-Arthur games with complexity as follows.

\begin{definition}
$\bold{MA}$ is the class of languages $\mathcal{L}\subseteq\{0,1\}^n$ for which there exists a poly-time uniform classical circuit $\{\mathcal{C}_x\}$ and a polynomial $d$, such that for all strings $x\in\{0,1\}^n$:
\begin{enumerate}

\item If $x\in\mathcal{L}$ then there exists a proof $z\in\{0,1\}^{d(n)}$ such that $\mathcal{C}_x$ accepts with probability at least $2/3$ given $(x,z)$ as input.

\item If $x\notin\mathcal{L}$ then $\mathcal{C}_x$ accepts with probability at most $1/3$ given $(x,z)$ as input, for all proofs $z$.

\end{enumerate}
\end{definition}
The existential quantifiers in the above definition rigorously capture the notion of a circuit having to `verify' the proof. It immediately follows that $\bold{NP}\subseteq\bold{MA}$. This definition will also allow us to readily present the quantum analogue to this class along with its analogue for all possible general theories.

\section{Quantum case} \label{app1}

The class of decision problems that can be solved by an efficient quantum computer with quantum advice, denoted by $\bold{BQP/qpoly}$, is defined as follows.

\begin{definition}
$\bold{BQP/qpoly}$ is the set of languages $\mathcal{L}\subseteq\{0,1\}^n$ for which there exists a poly-time uniform quantum circuit family $\{\mathcal{Q}_x\}$ and a set of (possibly non-uniform) states $\{\ket{\psi_n}\}_{n\geq 1}$ of $d(n)$ qubits for some polynomial $d$, such that for all strings $x\in\{0,1\}^n$:
\begin{enumerate}

\item If $x\in\mathcal{L}$ then $\mathcal{Q}_x$ accepts with probability at least $2/3$ given $\ket{x}\ket{\psi_n}$ as input.

\item If $x\notin\mathcal{L}$ then $\mathcal{Q}_x$ accepts with probability at most $1/3$ given $\ket{x}\ket{\psi_n}$.

\end{enumerate}
\end{definition}

The class of decision problems for which a ``yes'' outcome can be verified in quantum poly-time, with help from a poly-size quantum proof, or witness, state, denoted $\bold{QMA}$, is defined as follows. 

\begin{definition}
$\bold{QMA}$ is the set of languages $\mathcal{L}\subseteq\{0,1\}^n$ for which there exists a poly-time uniform quantum circuit $\{\mathcal{Q}_x\}$ and a polynomial $d$, such that for all strings $x\in\{0,1\}^n$:
\begin{enumerate}

\item If $x\in\mathcal{L}$ then there exists a $d(n)$-qubit quantum proof $\ket{\phi}$ such that $\mathcal{Q}_x$ accepts with probability at least $2/3$ given $\ket{x}\ket{\phi}$ as input.

\item If $x\notin\mathcal{L}$ then $\mathcal{Q}_x$ accepts with probability at most $1/3$ given $\ket{x}\ket{\phi}$ as input, for all proofs $\ket{\phi}$.

\end{enumerate}
\end{definition}

The existential quantifiers in the above definition of $\bold{QMA}$ rigorously capture the notion of a quantum circuit having to `verify' the quantum proof. 
\end{appendices}

\end{document}